\documentclass[runningheads]{llncs}
\usepackage[T1]{fontenc}

\usepackage{graphicx}
\graphicspath{{./figures/}}

\usepackage[ruled,vlined,linesnumbered]{algorithm2e}
\usepackage[cmex10]{amsmath}
\usepackage{amssymb}
\usepackage{color}
\newcommand{\cR}{{\cal R}}
\newcommand{\bC}{{\cal C}}
\newcommand{\cH}{{\mathbb{H}}}
\newcommand{\bR}{{\mathbb{R}}}
\newcommand{\cT}{{\cal T}}
\newcommand{\cS}{{\cal S}}
\newcommand{\cC}{{\mathbb{C}}}
\newcommand{\cI}{{\cal I}}
\newcommand{\cV}{{\cal V}}
\newcommand{\mv}{{\sc Mutual Visibility}}

\newcommand{\dist}{{\rm dist}}

\newcommand{\length}{{\rm length}}

%% automatically add qed in the lncs style

\usepackage{amsthm}

\begin{document}
\title{The Mutual Visibility Problem for Fat Robots with Lights}
\author{Rusul J. Alsaedi\orcidID{0000-0002-2942-9519} \and \\
Joachim Gudmundsson\orcidID{0000-0002-6778-7990} \and \\
André van Renssen\orcidID{0000-0002-9294-9947}}
\authorrunning{R. Alsaedi et al.}
\institute{University of Sydney, Australia \\ 
\email{rals2984@uni.sydney.edu.au, joachim.gudmundsson@sydney.edu.au, andre.vanrenssen@sydney.edu.au}}

\maketitle              % typeset the header of the contribution
\begin{abstract}
Given a set of $n\geq 1$ unit disk robots in the Euclidean plane, we consider the fundamental problem of providing mutual visibility to them: the robots must reposition themselves to reach a configuration where they all see each other. This problem arises under obstructed visibility, where a robot cannot see another robot if there is a third robot on the straight line segment between them. This problem was solved by Sharma {\it et al.}~[ICDCN, 2018] in the luminous robots model, where each robot is equipped with an externally visible light that can assume colors from a fixed set of colors, using 9 colors and $O(n)$ rounds. In this work, we present an algorithm that requires only 2 colors and $O(n)$ rounds. The number of colors is optimal since at least two colors are required even for point robots~[Di Luna {\it et al.}, Information and Computation, 2017].

\keywords{Mutual visibility \and Fat robots \and Obstructed visibility \and Collision avoidance \and Robots with lights}
\end{abstract}
\section{Introduction}
We consider a set of $n$ unit disk robots in $\bR^2$ and aim to position these robots in such a way that each pair of robots can see each other (see Figure~{\ref{fig:16}} for an example initial configuration where not all robots can see each other and an end configuration where they can). This problem is fundamental in that it is typically the first step in solving more complex problems. We consider the problem under the classical oblivious robots model~\cite{Flocchini2012}, where robots are autonomous (no external control), anonymous (no unique identifiers), indistinguishable (no external markers), history-oblivious (no memory of activities done in the past), silent (no means of direct communication), and possibly disoriented (no agreement on their coordinate systems). We consider this problem under the fully synchronous model, where in every synchronized cycle, called a \emph{round}, all robots are activated. All robots execute the same algorithm, following Look-Compute-Move (LCM) cycles~\cite{DAS2016171} (i.e., when a robot becomes active, it uses its vision to get a snapshot of its surroundings (Look), computes a destination point based on the snapshot (Compute), and finally moves towards the computed destination (Move)). We note that the robots do not initially know $n$, the total number of robots in the configuration. 

\begin{figure}[ht]
 \centering
  \includegraphics{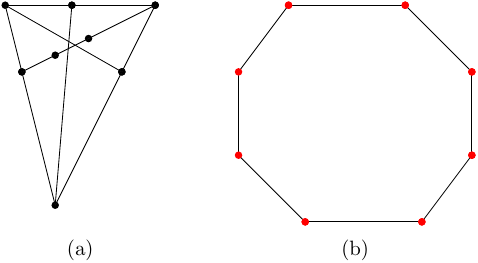}
  \caption{An example of an initial instance (a) and an end configuration (b).}
  \label{fig:16}
\end{figure}

This classical robot model has a long history and has many applications including coverage, exploration, intruder detection, data delivery, and symmetry breaking~\cite{cieliebak2012distributed}. Unfortunately, most of the previous work considered the robots to be dimensionless point robots which do not occupy any space. 

The classical model also makes the important assumption of unobstructed visibility, i.e., any three collinear robots are mutually visible to each other. This assumption, however, does not make sense for the unit disk robots we consider. To remove this assumption, robots under obstructed visibility have been the subject of recent research~\cite{Agathangelou2013,Bolla2012,Chaudhuri15,Cohen2008,Cord-Landwehr2011,Czyzowicz2009,di2017mutual,Luna2014,Luna2014b,Dutta2012,Flocchini2015,Sharma2015b,Sharma2015,Sharma2015c,Vaidyanathan2015}. Under obstructed visibility, robot $r_i$ can see robot $r_j$ if and only if there is at least
one point on the bounding circle of $r_j$ that is visible to $r_i$.

Additionally, a variation on this model received significant attention: the luminous robots model (or robots with lights model)~\cite{di2017mutual,Luna2014,Luna2014b,Peleg2005,Sharma2015b,Sharma2015,Vaidyanathan2015}. In this model, robots are equipped with an externally visible light which can assume colors from a fixed set. The lights are persistent, i.e., the color of the light is not erased at the end of the LCM cycle. When the number of colors in the set is 1, this model corresponds to the classical oblivious robots model~\cite{di2017mutual,Flocchini2012}. In this model, minimizing the number of lights is one of the objectives (in addition to execution time and having few, if any, additional assumptions), as requiring fewer lights would allow for simpler hardware in physical robots. 

Being the first step in a number of other problems, including the Gathering and Circle Formation problems~\cite{sharma2018complete}, the {\mv} problem received significant attention in this new robots with lights model. When robots are dimensionless points, the {\mv} problem was solved in a series of papers~\cite{di2017mutual,Luna2014,Luna2014b,Sharma2015b,Sharma2015,Vaidyanathan2015}. Unfortunately, the techniques developed for point robots do not apply directly to the unit disk robots, due to the lack of collision avoidance. For unit disk robots, much progress has been made in solving the {\mv} problem~\cite{Agathangelou2013,Bolla2012,Chaudhuri15,Cord-Landwehr2011,Czyzowicz2009,Dutta2012,poudel2019sublinear,sharma2018complete,sharma2018make}, however these approaches either require additional assumptions such as chirality (the robots agree on the orientation of the axes, i.e., on the meaning of clockwise), knowledge of $n$, or without avoiding collisions. Additionally, some approaches require a large number of colors and not all approaches bound the number of rounds needed. 

\subsection{Related Work}
Most of the existing work in the robots with lights model considers point robots~\cite{di2017mutual,Luna2014,Sharma2015,Vaidyanathan2015}. 
Di Luna {\it et al.}~\cite{di2017mutual} solved the {\mv} problem for those robots with obstructed visibility in the lights model, using 2 and 3 colors under semi-synchronous and asynchronous computation, respectively. 
Sharma {\it et al.}~\cite{Sharma2015} provided a solution for point robots that requires only 2 colors, which is optimal since at least two colors are needed~\cite{di2017mutual}. Unfortunately, the required number of rounds is not analyzed. 
Sharma {\it et al.}~\cite{sharma2017constant} also considered point robots in the robots with lights model. In the asynchronous setting, they provide an $O(1)$ time and $O(1)$ colors solution using their Beacon-Directed Curve Positioning technique to move the robots. 

Mutual visibility has also been studied for fat robots. Agathangelou {\it et al.}~\cite{Agathangelou2013} studied it in the fat robots model of Czyzowicz {\it et al.}~\cite{Czyzowicz2009}, where robots are not equipped with lights. Their approach allows for collisions, assumes chirality, and the robots need to know $n$, making it unsuited for our setting. 
Sharma {\it et al.}~\cite{sharma2018make} developed an algorithm that solves coordination problems for fat robots in $O(n)$ rounds in the classical oblivious model, assuming $n$ is known to the robots.

Poudel {\it et al.}~\cite{poudel2019sublinear} studied the {\mv} problem for fat robots on an infinite grid graph $G$ and the robots have to reposition themselves on the vertices of the graph $G$. They provided two algorithms; the first one solves the {\mv} problem in $O(\sqrt{n})$ time under a centralized scheduler. The second one solves the same problem in $\Theta(\sqrt{n})$ time under a distributed scheduler, but only for some special instances. 

When considering both fat robots and the robots with lights model, the main result is by Sharma {\it et al.}~\cite{sharma2018complete}. Their solution uses 9 colors and solves the {\mv} problem in $O(n)$ rounds. 

\subsection{Contributions}
We consider $n \geq 1$ unit disk robots in the plane and study the problem of providing mutual visibility to them. We address this problem in the lights model. In particular, we present an algorithm that solves the problem in $O(n)$ rounds using only 2 colors while avoiding collisions. The number of colors is optimal since at least two colors are needed for point robots~\cite{di2017mutual}. 

Our algorithm works under fully synchronous computation, where all robots are activated in each round and they perform their LCM cycles simultaneously in synchronized rounds. The moves of the robots are rigid, i.e., they cannot be interrupted during the execution, for example by an adversary~\cite{Flocchini2012}.

Our results improve on previous work in two ways. First, we improve in terms of the number of colors used compared to~\cite{sharma2018complete}. Secondly, by using fat robots and having a linear number of rounds, we generalize the results known for point robots~\cite{di2017mutual,Sharma2015}. Additionally, we require no additional assumptions such as chirality or knowledge of $n$.

\section{Preliminaries}
Consider a set of $n\geq 1$ anonymous robots $\cR=\{r_1,r_2,\ldots,r_n\}$ operating in the Euclidean plane. During the entire execution of the algorithm, we assume that $n$ is \emph{not} known to the robots.
Each robot $r_i\in \cR$ is a non-transparent disk with diameter $1$, sometimes referred to as a fat robot. The center of the robot $r_i$ is denoted by $c_i$ and the position of $c_i$ is also said to be the position of $r_i$. We denote by $\dist(r_i,r_j)$ the Euclidean distance between the two robots, i.e., the distance from $c_i$ to $c_j$. To avoid collisions among robots, we have to ensure that $\dist(r_i,r_j)\geq 1$ between any two robots $r_i$ and $r_j$ ($i \neq j$) at all times.
Each robot $r_i$ has its own coordinate system, and it knows its position with respect to its coordinate system.
Robots may not agree on the orientation of their coordinate systems, i.e., there is no common notion of direction. Since all the robots are of unit size, they agree implicitly on the unit of measure of other robots. The robots have a camera to take a snapshot, and the visibility of the camera is unlimited provided that there are no obstacles (i.e., other robots)~\cite{Agathangelou2013}.

We say that a point $p$ in the plane is visible by a robot $r_i$ if there is a point $p_i$ in the bounding circle of $r_i$ such that the
straight line segment $\overline{p_ip}$ does not intersect any other robot.
Following the fat robot model~\cite{Agathangelou2013,Czyzowicz2009},
we assume that a robot $r_i$ can see another robot $r_j$ if there is at least
one point on the bounding circle of $r_j$ that is visible from $r_i$.
We say that robot $r_i$ fulfills the mutual visibility property if $r_i$ can see all other robots in $\cR$. Two robots $r_i$ and $r_j$ are said to {\em collide} at time $t$ if the bounding circles of $r_i$ and $r_j$ share a common point at time $t$. For simplicity, we use $r_i$ to denote both the robot $r_i$ and the position of its center $c_i$.

Each robot $r_i$ is equipped with an externally visible light that can assume any color from a fixed set $\bC$ of colors. The set $\bC$ is the same for all robots in~$\cR$. 
The color of the light of robot $r$ at time $t$ can be seen by all robots that are visible to $r$ at time $t$.

A {\it configuration} $\cC$ is a set of $n$ tuples in $\bC\times \bR^2$ which define the colors and positions of the robots. Let $\cC_t$ denote the configuration at time $t$. Let $\cC_t(r_i)$ denote the configuration $\cC_t$ for robot $r_i$, i.e., the set of tuples in $\bC\times \bR^2$ of the robots visible to $r_i$.
A configuration $\cC_t$ is {\em obstruction-free} if for all $r_i \in \cR$, we have that $|\cC_t(r_i)|=n$. In other words, when all robots can see each other. 

Let $\cH_t$ denote the convex hull formed by the robots in $\cC_t$.
Let $\partial\cH_t=\cV_t\cup \cS_t$ denote the set of robots on the boundary of $\cH_t$, where $\cV_t \subseteq \cR$ is the set of corner robots lying on the corners of $\cH_t$ and $\cS_t \subseteq \cR$ is the set of robots lying in the interior of the edges of $\cH_t$. The robots in the set $\cV_t$ are called {\em corner robots} and those in the set $\cS_t$ are called {\em side robots}.
The robots in the set $\cI_t = \cH_t\backslash \partial\cH_t$
are called {\em interior robots}. Given a robot $r_i\in \cR$, we denote by $\cH_t(r_i)$ the convex hull of $\cC_t(r_i)$.
Note that $\cH_t(r_i)$ can differ from $\cH_t$ if $r_i$ does not see all robots on the convex hull. 

Given two points $a,b\in \bR^2$, 
we denote by $|{\overline{ab}}|$ the length of the straight line segment $\overline{ab}$ connecting them. Given $a,b,d\in \bR^2$, we use $\angle abd$ to denote the counterclockwise angle at point $b$ between $ab$ and $bd$.

At any time $t$, a robot $r_i\in \cR$ is either active or inactive. When active, $r_i$ performs a
sequence of {\em Look-Compute-Move} (LCM) operations:
\begin{itemize}
\item {\em Look:} a robot takes a snapshot of the positions of the robots visible to it in its own coordinate system; 
\item {\em Compute:} executes its algorithm 
using the snapshot. This returns a destination point $x\in \bR^2$ and a color $c\in \bC$; and
\item {\em Move:}  moves to the computed destination $x\in \bR^2$ (if $x$ is different than its current position) and sets its own light to color $c$.
\end{itemize}

We assume that the execution starts at time $0$. Therefore, at time $t=0$, the robots start in an arbitrary configuration $\cC_0$ with $\dist(r_i,r_j)\geq 1$ for any two robots $r_i,r_j\in \bR^2$, and the color of the light of each robot is set to {\it Off}.

Formally, the {\mv} problem is defined as follows: Given any $\cC_0$, 
in a finite number of rounds, reach an obstruction-free configuration without having any collisions in the process.
An algorithm is said to solve the {\mv} problem if it always achieves an obstruction-free configuration from any arbitrary initial configuration in a finite number of rounds. Each robot executes the same algorithm locally every time it is activated. We measure the quality of the algorithm both in terms of the number of colors and the number of rounds needed to solve the {\mv} problem.

Finally, we need the following definitions to present our {\mv} algorithm.
Let $e=\overline{v_1v_2}$ be a line segment connecting two corner robots $v_1$ and $v_2$ of $\cH_t$. Following Di Luna {\it et al.}~\cite{Luna2014}, we define the \emph{safe zone} $S(e)$ as a non-empty portion of the plane outside $\cH_t$ such that the corner robots $v_1$ and $v_2$ of $\cH_t$ remain corner robots when a side robot is moved into this area: for all points $x\in S(e)$, we ensure that $\angle xv_1v_2 \leq \frac{180^\circ-\angle v_0v_1v_2}{4}$ and $\angle v_1v_2x \leq \frac{180^\circ-\angle v_1v_2v_3}{4}$, where $v_0, v_1, v_2,$ and $v_3$ are consecutive vertices of the convex hull of $\cH_t$ (see Figure~\ref{fig:safe zone}(a))\footnote{The division by 4 ensures that no robots can become collinear. Values other than 4 can also work.}. 

We note that side robots and interior robots may not always be able to compute $S(e)$ exactly due to obstructions of visibility leading to different local views. A single side robot on $e$ can compute $S(e)$ exactly. However, when there is more than one robot on $e$, $S'(e)$ is the safe region computed by a robot based on its local view. It is guaranteed that $S'(e)\subseteq S(e)$ (see Figure~\ref{fig:safe zone}(b) for the safe zone of robot $r_2$, which cannot see $v_1$ and thus uses $r_1$ and $r_3$ to compute a more restricted safe zone).

\begin{figure}[ht]
 \centering
  \includegraphics{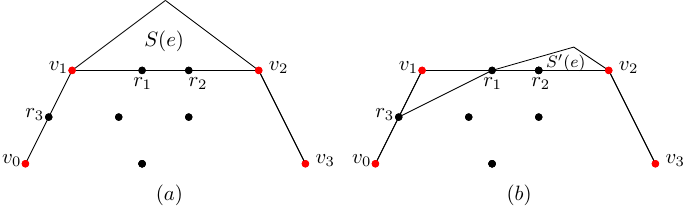}
  \caption{(a) The safe zone of $e=\overline{v_1v_2}$. (b) The safe zone of a side robot $r_2$ on $e$.}
  \label{fig:safe zone}
\end{figure}

Unfortunately, interior robots force us to use a slightly modified definition of a safe zone compared to Di Luna {\it et al.}~\cite{Luna2014}. As our algorithm will later show, we only use the safe zone of an edge $e$ for the interior robot $r_1$ that is closest to that edge. This implies that $r_1$ can always see both endpoints of $e$. However, $r_1$ may not be able to see $v_0$ and/or $v_3$ due to other interior robots blocking visibility to them. Moreover, if $r_1$ observes an interior robot $r_2$ between two corner robots in cyclic order (say immediately counterclockwise from $v_1$), it has no way of checking whether there exists a corner robot that is hidden from $r_1$'s view by $r_2$. To overcome this issue, we will (pessimistically) assume that $r_2$ indeed blocks visibility to a corner robot and to minimize the implied safe zone defined using this hidden corner robot, we will assume this robot is infinitely far away from $r_1$ in the direction of $r_2$. This means that the line segment connecting this potential corner robot to $v_1$ is parallel to $\overline{r_1 r_2}$. Hence, we use the line parallel to $\overline{r_1 r_2}$ through $v_0$ to determine the angle allowed for the safe zone, i.e., $\angle v_0v_1v_2$ is the angle between edge $e$ and the line parallel to $\overline{r_1 r_2}$ through $v_0$ (see Figure~\ref{fig:safe zone2}). 

\begin{figure}[ht]
 \centering
  \includegraphics{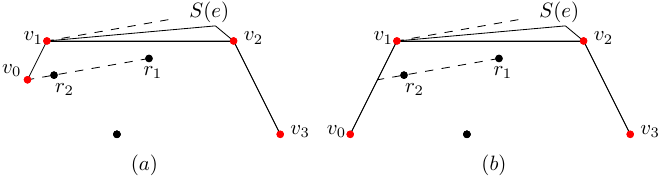}
  \caption{Robot $r_1$ cannot determine whether robot $r_2$ blocks visibility to a corner robot. In either case the line parallel to $\overline{r_1 r_2}$ is used to compute the safe zone. (a) Robot $r_2$ hides a corner robot. (b) Robot $r_2$ does not hide a corner robot.}
  \label{fig:safe zone2}
\end{figure}

\section{The Mutual Visibility Algorithm}
In this section, we present an algorithm that solves the {\mv} problem for $n\geq 1$ unit disk robots under rigid movement in the robots with lights model. Our algorithm assumes the fully synchronous setting of robots. The algorithm needs two colors: $\bC=\{${\it Off}, {\it Red}$\}$. A red robot represents a corner robot. A robot whose light is off represents any other robot. 
See Figure~\ref{fig:17} for an example. 
Initially, the lights of all robots are off. 

\begin{figure}[ht]
 \centering
  \includegraphics{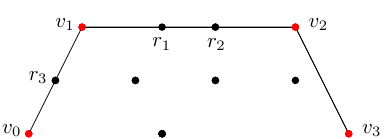}
  \caption{The different colors of the robots: corner robots (red), side robots (off), and interior robots (off).}
  \label{fig:17}
\end{figure}

It has been shown that positioning the robots
in the corners (i.e., vertices) of an $n$-vertex convex polygon provides a solution to the {\mv} problem~\cite{di2017mutual,Luna2014,Flocchini2015,Sharma2015b,Sharma2015,Vaidyanathan2015}. Hence, our algorithm also ensures that the robots eventually position themselves in this way. 

Conceptually, our general strategy consists of two phases, though the robots themselves do not explicitly discern between them. In the \emph{Side Depletion} phase, some side robots move to become corner robots, ensuring that there are only corner and interior robots left. In the \emph{Interior Depletion} phase, interior robots move and become corner robots. The move-algorithm checks if the robot's path shares any point with any other robots, ensuring that no collision occur. Throughout both phases, corner robots slowly move to expand the convex hull to ensure that the interior robots can move through the edges of the convex hull when needed. This movement is deterministic and is taken into account when moving robots to become corners of the expanding hull. 

Detailed pseudocode of the algorithm and its subroutines can be found in the appendix. 

\subsection{The Side Depletion Phase}
The first phase of our algorithm is the Side Depletion (SD) phase. During this phase, every robot first determines if it is a corner, side, or interior robot and sets its light accordingly. Note that robots can make this distinction themselves, by checking what angle between consecutive robots it sees: if some angle is larger than $180^\circ$ it is a corner robot, if the angle is exactly $180^\circ$ it is a side robot, and otherwise it is an interior robot. 

In every round, all corner robots move a distance of 1 along the angle bisector determined by its neighbors in the direction that does not intersect the interior of the convex hull. In other words, in each round, the corner robots move to expand the size of the convex hull. We note that since all corner robots move this way, they all stay corner robots throughout this process. 

Side robots that see at least one corner robot (i.e., a robot with a red light) move to become new corner robots of $\cH$ (using the safe zone described earlier and taking the above movement of corner robots into account) and change their light to red. Side robots that do not see a corner robot on their convex hull edge do not move and will become interior robots in the next round (due to the change to the convex hull), while keeping their light off. 

More precisely, a side robot $r$ on edge $e=\overline{v_1v_2}$ of $\cH_k$ moves as follows:
If at least one of its neighbors on $\overline{v_1v_2}$ is a corner robot, $r$ moves to a point in the safe zone $S(e)$. There are at most two such robots $r_1$ and $r_2$ on each edge $\overline{v_1v_2}$ (see Figure~{\ref{fig:1} and \ref{fig:2}}). Sharma {\it et al.}~\cite{sharma2018complete} showed that these can move simultaneously to the safe zone outside the hull. Both $r_1$ and $r_2$ become new corners of $\cH$ and change their lights to red (see Figure~{\ref{fig:2}}).

\begin{figure}[ht]
 \centering
  \includegraphics{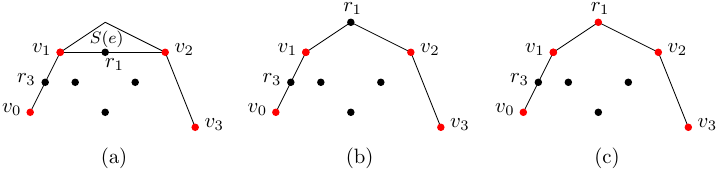}
  \caption{One side robot $r_1$ on an edge $e=\overline{v_1v_2}$ moves to become a corner robot.}
  \label{fig:1}
\end{figure}

\begin{figure}[ht]
 \centering
  \includegraphics{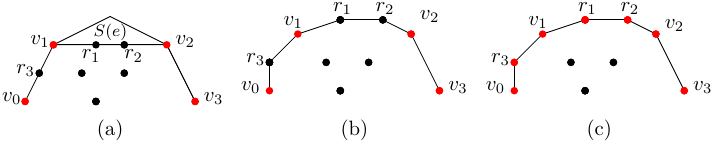}
  \caption{Two side robots $r_1$ and $r_2$ on an edge $e=\overline{v_1v_2}$ move to become corner robots.}
  \label{fig:2}
\end{figure}

If both of its neighbors on $\overline{v_1v_2}$ are not corners (see Figure {\ref{fig:6}}), robot $r$ does not move and stay in its place, and it will become an interior robot in the next round.

We only execute this phase once, at the start of our algorithm and only move each robot once.

\begin{figure}[ht]
  \centering
   \includegraphics{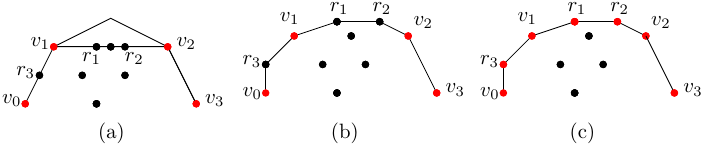}
   \caption{When there are more than two side robots on an edge of the convex hull, only two side robots on the edge move to become corner robots. These are the clockwise and the counterclockwise extreme side robots. In this case, robots $r_1$ and $r_2$ move to become corner robots.}
  \label{fig:6}
\end{figure}

\subsection{The Interior Depletion Phase}
Once the SD phase finishes, the Interior Depletion (ID) phase starts. During this phase the robots in the interior of the hull move such that they become new vertices of the hull. 

In every round, all corner robots move as in the SD phase, expanding the convex hull. This ensures that the length of all edges increases and thus interior robots can move through these edges to in turn become corner robots themselves. All movement described in the remainder of this paper takes the (predictably) expanding convex hull into account. 

Next we describe how an interior robot moves. Given a robot $r_i$, we define its \emph{eligible} edges as those edges of length at least 3 for which no other robot is closer to the edge\footnote{The length of 3 is used to ensure that two robots can move through the same edge without colliding with each other (requiring a length of 2) while ensuring that they also do not collide with the corner robots on the edge (adding a length of 0.5 per corner robot).} and $r_i$ is not between two other robots at the same distance to this edge. 
The interior robots start by determining their eligible edges (see Figure~{\ref{fig:10}}(a)). In the figure, robot $r_i$ finds edges $\overline{v_1v_2}$ and $\overline{v_2v_3}$ eligible, whereas $r_j$ finds $\overline{v_2v_3}$, and $r_l$ finds $\overline{v_3v_4}$ eligible. However, the robots between $r_i$, $r_j$ find no edge eligible. Let $Q$ denote the set of edges that are eligible to an interior robot $r_i$.
Every interior robot that has an eligible edge moves perpendicular to one of its eligible edges $e$ towards $e$ to become a corner robot by moving through $e$ (see Figure~{\ref{fig:10}}(b)), while avoiding collisions with other robots (see Figure~{\ref{fig:10}}(c)). If the path is clear, it moves outside the hull into its safe zone to become a new corner as described earlier (see Figure~{\ref{fig:10}}(d)) and changes its color to red (see Figure~{\ref{fig:10}}(e)).

\begin{figure}[ht]
 \centering
  \includegraphics{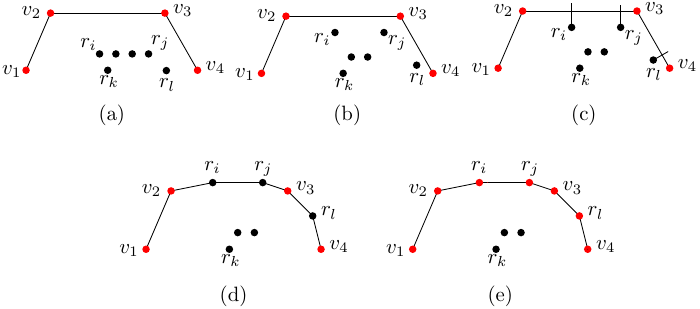}
  \caption{(a) The eligible edge computation. The robot $r_i$ finds edges $\overline{v_1v_2}$ and $\overline{v_2v_3}$ eligible, whereas $r_j$ finds $\overline{v_2v_3}$, and $r_l$ finds $\overline{v_3v_4}$ eligible. The robots between $r_i$, $r_j$ find no eligible edges.
  (b) Interior robots $r_i$, $r_j$ and $r_l$ move towards the edge.
  (c) Since interior robots $r_i$, $r_j$ and $r_l$ move perpendicular to their respective edge, collisions with other robots are avoided.
  (d) After the interior robots $r_i$, $r_j$ and $r_l$ move, they become corners.
  (e) Robots $r_i$, $r_j$ and $r_l$ change their lights to red.}
  \label{fig:10}
\end{figure}

When both phases are finished, the {\mv} problem is solved, and all the robots are in the corners of the convex hull with red lights. 

\subsection{Special Cases}
There are two special cases to consider: $n=1$, and the case where the initial configuration is a line. The case $n = 1$ can be easily recognized by the only robot, since it does not see any other robot and thus it can terminate. 

If in the initial configuration all robots lie on a single line, we differentiate between the robots that see only one other robot and the robots that see two other robots. If a robot $r_i$ sees only one other robot $r_j$, when $r_i$ is activated for the very first time it sets its light to red and moves orthogonal to the line $\overline{r_ir_j}$ for some arbitrary positive distance. When $r_i$ is activated in future rounds and $\cH_k(r_i)$ is still a line segment, it can conclude that there are only two robots and it does nothing until it sees $r_j$ set its light to red. Once $r_j$ sets its light to red, $r_i$ terminates. 

If a robot $r_i$ sees two other robots $r_j$ and $r_l$, robot $r_i$ will be able to tell if $\cH_k(r_i)$ is a line segment as follows. Robot $r_i$ will move orthogonal to line $\overline{r_j r_l}$ and set its light to red if and only if it sees that the lights of $r_j$ and $r_l$ are set to red, as this indicates that both other robots see only a single other robot, i.e., $n=3$. Otherwise, moving the two extremal robots of the initial configuration as described above ensures that the configuration is no longer a line segment, allowing the SD and ID phase to solve the problem. 

As these special cases add only a constant number of rounds to the running time and do not influence the number of colors, we focus on the general case in the remainder of this paper.

\section{Analysis}
We proceed to prove that our algorithm solves the {\mv} problem in a linear number of rounds, using only two colors and while avoiding collisions between the robots. 

We start with some properties of the Side Depletion phase.

\begin{lemma}
\label{lemma:1}
Given a configuration $\cC_{k}$ and an edge $e=\overline{v_1v_2}$ of $\cH_{k}$, if a robot $r_i\in e$ moves away from $e$, it will move into the safe zone $S(e)$. 
\end{lemma}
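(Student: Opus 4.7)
The plan is to show that the movement of $r_i$ is by construction always into the locally-computed safe zone $S'(e)$, and that $S'(e) \subseteq S(e)$ as already observed in the preliminaries, so the conclusion follows. Since the algorithm only moves a side robot off $e$ when at least one of its neighbors along $e$ is a corner, I would first narrow down which robots can actually be triggered to move: these are only the clockwise/counterclockwise extremal side robots on $e$ (i.e.\ those adjacent to $v_1$ or to $v_2$). This is important because such a robot is guaranteed to see at least one endpoint of $e$ directly, which is what gives it a well-defined reference for computing angles off $e$.

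Next I would split into two cases according to the local view of $r_i$. If $r_i$ is the unique side robot on $e$, then by the preliminaries $r_i$ can determine $v_1, v_2, v', v_3$ exactly and thus computes $S(e)$ itself; any destination it picks inside this set is trivially in $S(e)$. Otherwise, $r_i$ is one of the two extremal side robots and its local view may occlude part of the hull; in this case $r_i$ computes some $S'(e)$ using the positions it can see. I would then invoke the observation stated in the preliminaries that $S'(e) \subseteq S(e)$: intuitively, the angle conditions
\[
\angle xv_1v_2 \leq \tfrac{180 - \angle v'v_1v_2}{4}, \qquad \angle v_1v_2x \leq \tfrac{180 - \angle v_1v_2v_3}{4}
\]
are tightened, not loosened, when one substitutes the visible approximations of $v'$ and $v_3$ because any hidden robot lies between $r_i$ and the true corners, forcing the locally perceived interior angles of the hull to be no smaller than the true ones. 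Hence the constraints defining $S'(e)$ are at least as restrictive as those defining $S(e)$, giving $S'(e) \subseteq S(e)$.

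Finally, since the move routine selects the destination point inside the set $S'(e)$ it just computed (and accounts for the simultaneous outward step of the corner robots, as noted where the algorithm is described), the destination lies in $S(e)$ as required.

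The main obstacle I anticipate is not the high-level argument but the bookkeeping in the second case: one has to be careful that the moving robot can always determine enough of the convex hull (in particular an edge-endpoint of $e$ and one adjacent hull edge) to compute a non-empty $S'(e)$, and that the movement of corner robots during the same round does not push $r_i$'s destination outside $S(e)$. Both are handled by the fact that a moving side robot is adjacent to a corner — hence sees at least one hull corner and one adjacent edge — and by the fact that the safe-zone conditions are angle-based and invariant under the symmetric outward translation of all corners described in the SD phase.
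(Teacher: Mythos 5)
Your proposal is correct and follows essentially the same route as the paper's proof: handle the lone side robot by noting it computes $S(e)$ exactly, and for the extremal side robots argue that the locally computed safe zone $S'(e)$ is contained in $S(e)$ because the occluded view only enlarges the perceived hull angles, so the angle bound $\delta=\min\{\alpha/4,\beta/4\}$ is a conservative underestimate. Your additional remarks (only extremal side robots are triggered to move, and the simultaneous outward step of the corners must be accounted for) are consistent with, and slightly more explicit than, the paper's treatment.
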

\begin{proof}
We prove this lemma using proof techniques similar to those of Lemma 3 in~\cite{Luna2014}. Let $v_1$ and $v_2$ be the two corner robots that define $e$. If there is a single robot $r\in e$, $r$ can compute $S(e)$ exactly and then move into $S(e)$, proving the lemma. Consider the situation when there are at least two side robots on $e$. Let $r_1$ and $r_2$ be the two robots on $e$ that are neighbors of $v_1$ and $v_2$, respectively. In the fully synchronous setting, both $r_1$ and $r_2$ move from $e$ in the same round. Consider only the move of $r_2$ to $S(e)$ (the move of $r_1$ follows similarly). 

Robot $r_2$ orders the robots it can see in clockwise order and let this ordering be $\{v_0,v,r_2,v_2,v_3\}$, where $v_0$ is the first robot non-collinear to $r_2$ in the clockwise direction with its light set to red, $v$ is the robot that is collinear with $r_2$ in the clockwise direction, and $v_2$ is the collinear robot in the counterclockwise direction with its light set to red, and $v_3$ is the first non-collinear robot in the counterclockwise direction with its light set to red. Following the rules of Algorithm~\ref{algorithm:2s}, $r_2$ computes $\alpha=180^\circ-\angle {v_0 v v_2}$, $\beta=180^\circ-\angle {v v_2 v_3}$, and $\delta=\min\{\alpha/4, \beta/4\}$. We note that since we calculate $\alpha$ by subtracting $\angle {v_0 v v_2}$ from $180^\circ$, $\alpha$ may be smaller than the actual angle used to define $S(e)$. Therefore, any point $x$ in the safe zone computed by $r_2$ is inside the safe zone of $e$, and thus, $r_2$ will move inside $S(e)$. The same holds for $r_1$. The other robots on $e$ between $r_1$ and $r_2$ do not move.
\end{proof}

\begin{lemma}
\label{lemma:2}
Let $r_i$ and $r_j$ be the robots that are neighbors of endpoints $v_1$ and $v_2$ on edge $e$, respectively. When there are $p \leq 2$ side robots on $e$, $r_i$ and $r_j$ become corners and change their light to red in the next round. When there are $p > 2$ side robots on $e$, $r_i$ and $r_j$ become corners and change their light to red after which all the robots on $e$ between $r_i$ and $r_j$ lie inside the convex hull and become interior robots.
\end{lemma}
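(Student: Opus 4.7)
The plan is to dispatch the lemma via a case split on the number $p \geq 1$ of side robots on $e$, using Lemma~\ref{lemma:1} as the main tool. By the Side Depletion rule, a side robot on $e$ moves in this round only if one of its neighbors on $e$ is a corner, so only $r_i$ (adjacent to $v_1$) and $r_j$ (adjacent to $v_2$) are candidates for moving, while every other side robot on $e$ stays put. Whenever $r_i$ or $r_j$ does move, Lemma~\ref{lemma:1} guarantees that it lands in $S(e)$, and the algorithm simultaneously switches its light to red.

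For $p=1$ we have $r_i=r_j$, a single robot moving into $S(e)$; the angular inequalities defining $S(e)$ keep $\angle v' v_1 r_i$ and $\angle r_i v_2 v_3$ strictly below $180^\circ$, so $v_1$ and $v_2$ remain corners and the moved robot appears as a new corner strictly outside the old hull. For $p=2$, both $r_i$ and $r_j$ move; I would appeal to the simultaneous-move argument of Sharma~\emph{et~al.}~\cite{sharma2018complete} cited in the excerpt to rule out collisions, and the safe-zone definition again forces both moved robots to be new corners with red lights in $\cH_{k+1}$.

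For $p>2$, Lemma~\ref{lemma:1} again places $r_i$ and $r_j$ in $S(e)$, so they are new corners of $\cH_{k+1}$, while the remaining side robots on $e$ stay at their old positions on $\overline{v_1 v_2}$. Combined with the unit outward displacement of $v_1$ and $v_2$ along their angle bisectors (which expands the hull in the neighborhood of $e$), the new boundary of $\cH_{k+1}$ in the region that was bounded by $e$ lies strictly on the outer side of the line through the old segment $\overline{v_1 v_2}$. Hence every stationary intermediate side robot is strictly inside $\cH_{k+1}$ and becomes an interior robot.

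The main obstacle I anticipate is the geometric verification in the $p>2$ case: proving that no intermediate side robot ends up on the new boundary of $\cH_{k+1}$ rather than strictly inside it. This requires combining the strict $1/4$ angular slack built into the definition of $S(e)$ with the guaranteed outward movement of $v_1$ and $v_2$, so as to show that the new hull edges $\overline{v_1' r_i}$, $\overline{r_i r_j}$, and $\overline{r_j v_2'}$ (where $v_1',v_2'$ denote the post-expansion positions of the corner robots) all lie strictly on the outer side of the original line through $v_1$ and $v_2$, which is precisely what places the intermediate side robots in the proper interior.
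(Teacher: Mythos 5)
Your proposal is correct and follows essentially the same route as the paper: a case split on $p$, an appeal to Lemma~\ref{lemma:1} to place $r_i$ and $r_j$ in $S(e)$ (whence they become corners by the safe-zone angle conditions), and for $p>2$ the observation that the stationary intermediate robots end up strictly inside the enlarged hull. If anything, your geometric justification for the $p>2$ case is more explicit than the paper's one-line remark that the new hull "has more than three sides," so no gap to report.
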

\begin{proof}
If $p \leq 2$, $r_i$ and $r_j$ see only the corners and each other on $e$. Hence, both robots move and by Lemma~\ref{lemma:1} they move into $S(e)$. By moving $r_i$ and $r_j$ to $S(e)$, they become corner robots, as was also argued by Di Luna {\em et al.}~\cite{Luna2014}. 

When $p>2$, a similar argument shows that both $r_i$ and $r_j$ become corners of $\cH_k$ after they move once and change their light to red in the next round. The other side robots on $e$ remain in their places and since $\cC_{k}$ is not a line, moving $r_i$ and $r_j$ creates a hull that has more than three sides, implying that the robots between $r_i$ and $r_j$ lie strictly inside this hull. Thus, the other side robots become interior robots.
\end{proof}

\begin{lemma}
\label{lemma:3}
Given a configuration $\cC_{0}$ with $q \geq 1$ side robots. After one round, all side robots become either corner robots or interior robots.
\end{lemma}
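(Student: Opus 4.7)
The plan is to reduce Lemma~\ref{lemma:3} directly to Lemma~\ref{lemma:2} by partitioning the $q$ side robots according to which edge of $\cH_{0}$ they lie on. Since every side robot lies in the interior of exactly one edge of $\partial\cH_{0}$, I would enumerate the edges $e_1,\ldots,e_m$ of $\cH_{0}$ that contain at least one side robot and let $p_j \geq 1$ denote the number of side robots on $e_j$. Then $\sum_{j} p_j = q$, and it suffices to verify that after one round all side robots on each $e_j$ have become corner robots or interior robots.

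For a fixed $e_j$, I would apply Lemma~\ref{lemma:2} directly. If $p_j \leq 2$, the (at most two) side robots on $e_j$ are exactly the neighbors $r_i, r_j$ of the endpoints $v_1, v_2$, so Lemma~\ref{lemma:2} yields that both move into $S(e_j)$, become corners of the new hull, and set their lights to red. If $p_j > 2$, the second half of Lemma~\ref{lemma:2} applies: the two extremal side robots $r_i, r_j$ become corner robots while the remaining $p_j - 2$ side robots remain in place and end up strictly inside the new convex hull, hence become interior robots. Since robots operate in full synchrony, these moves on different edges occur simultaneously in the same round, and taking the union over $j=1,\ldots,m$ gives the lemma.

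The part that needs the most care is checking that the simultaneous outward expansion of all corner robots (by $1$ along their angle bisectors) does not invalidate the edge-by-edge argument. I would argue that the expansion is a predictable, deterministic move that every side robot can anticipate from its snapshot: the extremal side robots $r_i,r_j$ compute their safe zones relative to the new, post-expansion corner positions, so by Lemma~\ref{lemma:1} they still land inside $S(e_j)$; meanwhile the stationary middle side robots are enclosed by the hull formed by the expanded corners together with the new corner positions of $r_i,r_j$ outside $e_j$, exactly as in Lemma~\ref{lemma:2}. Hence no cross-edge interference arises, and the linear-in-edges case analysis suffices to conclude the proof.
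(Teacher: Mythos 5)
Your proposal is correct and follows essentially the same route as the paper: decompose the side robots by the edge they lie on, invoke Lemma~\ref{lemma:2} on each edge, and note that moves on distinct edges do not interfere and all complete in a single synchronous round. The only cosmetic difference is that the paper treats the single-side-robot-per-edge case separately (the robot computes $S(e)$ exactly, as in Lemma~\ref{lemma:1}) rather than folding it into the $p \leq 2$ case of Lemma~\ref{lemma:2}.
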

\begin{proof}
The movements of side robots on different edges of $\cH$ do not interfere with each other. Therefore, we prove this lemma for a single edge $e$ and the same argument applies for the side robots on other edges of $\cH$.

When there is only one robot $r$ on $e$, then $r$ can compute $S(e)$ exactly and move to a point $x \in S(e)$ as soon as it is activated. When there are two or more robots on $e$, two side robots (the extreme ones on this edge) become corners in one round by Lemma~\ref{lemma:2}. This causes the other robots on $e$ to become interior robots. 

Since the robots on different edges do not influence each other and the moves on any edge end in one round, this phase ends in one round.
\end{proof}

Now that there are no more side robots, we argue that the interior robots also eventually become corners.
We first show that the interior robots can determine whether the SD phase has finished. 
\begin{lemma}
\label{lemma:4}
Given a configuration $\cC_{k}$ and an edge $e=\overline{v_1v_2}$ of $\cH_k$, no robot in the interior of $\cH_k$ moves to $S(e)$ if there is a side robot on $e$.
\end{lemma}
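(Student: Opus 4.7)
The plan is to leverage the ID phase's move rule: an interior robot $r_i$ only moves toward an edge $e$ that is \emph{eligible} for it, meaning $e$ has length at least $3$ and no other robot is closer to $e$ than $r_i$. I will show that whenever a side robot lies on $e$, the closeness condition is violated from $r_i$'s local view, so $r_i$ never selects $e$ and hence never moves into $S(e)$.

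Let $s$ be a side robot on $e$ and let $r_i$ be any interior robot of $\cH_k$. Since $r_i$ lies strictly inside $\cH_k$, its perpendicular distance to $e$ is strictly positive, whereas $s$ lies on $e$ at distance zero. Hence if $r_i$ sees $s$ in its snapshot, it immediately observes a robot strictly closer to $e$ than itself, and $e$ fails to be eligible for $r_i$.

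The only subtle case is when $r_i$ does not see $s$. Under obstructed visibility this means there is a third robot $b$ intersecting the line of sight between $r_i$ and $s$; because $b$ is the first obstruction along that ray, $r_i$ \emph{does} see $b$. Since $b$ lies strictly between the interior point $r_i$ and the point $s$ on $e$, a short similar-triangles projection onto the line supporting $e$ shows that the perpendicular distance from $b$ to $e$ is strictly smaller than that from $r_i$. So once again $r_i$ detects a robot closer to $e$ than itself, and $e$ is not eligible. Combining both cases with the ID phase's move rule yields the lemma. The only delicate point is the obstructed-visibility case, but the observation that the blocker $b$ must itself be visible to $r_i$ and must project strictly nearer to $e$ than $r_i$ makes it routine.
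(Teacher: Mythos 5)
Your overall route---arguing that an edge containing a side robot can never be \emph{eligible} for an interior robot---is genuinely different from the paper's proof, which instead asserts (quite tersely) that interior robots can detect that the SD phase is unfinished and therefore do not move. Your Case 1 is fine: a visible side robot $s$ on $e$ sits at distance $0$ from $e$, so $e$ fails the eligibility test for any interior robot (in the pseudocode it also violates the requirement that no robot lie on the candidate segment). The gap is in Case 2. For fat robots a blocker $b$ obstructs a line of sight as soon as its unit disk meets the sight segment; its \emph{center} need not lie on that segment and can be displaced from it by up to $1/2$. If $s$ lies far along $e$ from the foot of $r_i$'s perpendicular, the segment $\overline{c_ic_s}$ is nearly parallel to $e$, so a point of it near $r_i$ still has perpendicular distance to $e$ close to that of $r_i$, and a blocker touching the segment there can have its center strictly \emph{farther} from $e$ than $r_i$ is. Concretely, with $e$ on the $x$-axis, $c_i=(0,1)$, $c_s=(\sqrt{99},0)$, a robot centered at $(\sqrt{99}/10,\,1.4)$ has distance $\approx 1.07\ge 1$ from $r_i$, touches the sight segment at the point one unit from $c_i$, yet lies at distance $1.4>1$ from $e$. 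So the ``similar triangles'' step fails, and the visible blocker you exhibit need not witness ineligibility.

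The statement is still true, but the clean repair is not a refined triangle argument; it is the lights, which is presumably what the paper's proof is gesturing at. By Lemma~\ref{lemma:3}, side robots exist only in the very first round, when every light is still \emph{Off}; the Interior routine only admits a segment $\overline{cd}$ into $Q$ when both of its endpoints already show \emph{Red}, so $Q=\emptyset$ for every interior robot in that round and no interior robot moves at all, let alone into $S(e)$. If you want to keep a purely geometric argument, you would instead have to show that among \emph{all} the robots collectively obstructing $r_i$'s view of $s$, at least one visible robot is strictly closer to $e$ than $r_i$---a statement your single-blocker projection does not establish.
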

\begin{proof}
If there are side robots in $\cH_k$, it is easy to see that every corner robot of $\cH_k$ on an edge that contains side robots sees at least one side robot. Similarly, when there are side robots, interior robots can easily infer that the SD phase is not finished, and hence they do not move to their respective $S(e)$.
\end{proof}

Next, we argue in a series of lemmas that every interior robot will eventually become a corner robot and it does not collide with any robots in doing so. Let $\cC_{SD}$ denote the configuration of robots after the SD phase is finished and let $\cH_{SD}$ be the convex hull created by $\cC_{SD}$.

\begin{lemma}
\label{lemma:6}
Let $I_k$ be the set of interior robots in round $k \in \mathbb{N}^+$. In each round $k$ until $I_k=\emptyset$, if there is an edge of length at least 3, there is at least one robot in $I_k$ for which the set of line segments $Q$ is not empty.
\end{lemma}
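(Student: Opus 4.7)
The plan is to prove this by a short extremal (minimum-distance) argument: I would exhibit a specific interior robot $r^\star$ and a specific edge $e^\star$ of $\cH_k$ such that $e^\star \in Q(r^\star)$, thereby certifying that $Q(r^\star)$ is non-empty.

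First, I would observe that, by hypothesis, some edge of $\cH_k$ has length at least $3$ and, by the scope of the lemma ($I_k$ is the current set of interior robots, which we are assuming is non-empty), the collection of pairs $(r,e)$ with $r \in I_k$, $e$ an edge of $\cH_k$, and $|e| \geq 3$ is itself non-empty. From this finite collection I would pick a pair $(r^\star,e^\star)$ that minimises the perpendicular distance from $r$ to the line supporting $e$---this is precisely the distance that governs the intermediate perpendicular movement an interior robot would make toward its candidate edge in the ID phase.

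Next I would verify the two defining conditions of eligibility for $e^\star \in Q(r^\star)$. Condition (i), $|e^\star|\geq 3$, holds by construction. Condition (ii), that no other interior robot is closer to $e^\star$ than $r^\star$, is an immediate consequence of the minimality of the chosen pair: any interior robot strictly closer to $e^\star$ would, paired with $e^\star$, yield a candidate with strictly smaller perpendicular distance, contradicting the choice of $(r^\star,e^\star)$ as a minimiser. The corner endpoints $v_1,v_2$ of $e^\star$ sit on $e^\star$ itself, so they are not competing ``other'' robots in the intended sense of the eligibility definition.

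The main obstacle here is not geometric but definitional: I need to nail down exactly what ``closer to the edge'' means in the eligibility definition, so that corner or side robots attached to $e^\star$ (which have set-distance $0$ to it) are correctly excluded from the comparison, and so that the comparison is taken among interior robots via the perpendicular distance to the supporting line of $e^\star$. Once this interpretation is fixed---which is natural given that interior robots in the algorithm move perpendicularly toward their chosen edge---the entire argument collapses to the one-line extremal step above, with no deeper convex-geometry machinery required.
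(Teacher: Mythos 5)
Your core idea is the same as the paper's: an extremal (closest-robot) argument. The paper fixes an edge of length at least $3$ and considers the interior robots closest to it, whereas you minimise over all (robot, edge) pairs with $|e|\geq 3$; both choices guarantee that no interior robot is strictly closer to the selected edge than the selected robot, and your observation that the triangle over $e^\star$ with apex $r^\star$ is then empty comes along for free.

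There is, however, a genuine gap: your argument only excludes robots \emph{strictly} closer to $e^\star$ and is silent about ties, while the set $Q$ the lemma refers to is the one computed in Algorithm~\ref{algorithm:2i}, whose condition (d) removes an edge from a robot's $Q$ whenever there are robots at \emph{equal} distance to that edge on both the clockwise and counterclockwise sides of it. If several interior robots lie at the same minimal distance from $e^\star$ (they then all lie on a common line parallel to $e^\star$), your minimising pair $(r^\star,e^\star)$ may well pick a ``middle'' robot of that line, for which condition (d) fails and hence $e^\star\notin Q(r^\star)$. The paper's proof closes exactly this case: it notes that the tied robots form a line and that the two extreme robots on that line have no equally distant robot on at least one side, so the edge does lie in \emph{their} $Q$. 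Your proof needs the same additional tie-breaking step --- after minimising the distance, replace $r^\star$ by an extreme robot among those attaining the minimum. The definitional point you do flag (excluding the corner robots incident to $e^\star$ from the ``closer'' comparison) is legitimate but is not the missing piece; the missing piece is the equal-distance clause.
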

\begin{proof}
We note that every edge of the convex hull of the corner robots $\cH_k$ is closest to some interior robot(s). In particular, this holds for any edge of length at least 3. We note that this set of interior robots forms a line, as they all have the same closest distance to the edge. Out of these robots, by definition, the left and right extreme ones have the edge in their $Q$. 
\end{proof}

\begin{lemma}
\label{lemma:8}
Let $\cC_{SD}$ be the configuration after the SD phase ended and let $e=\overline{v_1v_2}$ be the edge of $\cH_{SD}$ closest to some interior robot $r_i$. If the robot $r_i \in I_k$ moves, it moves inside the safe zone $S(e)$.
\end{lemma}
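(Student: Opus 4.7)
The plan is to adapt the argument of Lemma~\ref{lemma:1} to the interior-robot setting. Since $\cC_{SD}$ contains no side robots, the only robots adjacent to $e = \overline{v_1 v_2}$ on the boundary of $\cH_{SD}$ are the corner robots $v_1$ and $v_2$ themselves, together with their neighbouring corner robots $v'$ (preceding $v_1$) and $v_3$ (following $v_2$) along $\partial\cH_{SD}$. I would begin by recording what $r_i$ can infer about these four corner robots from its local snapshot: because $e$ is eligible for $r_i$, the robot $r_i$ is the robot closest to $e$ among those that include $e$ in their set $Q$, so nothing lies between $r_i$ and any point on $e$, and in particular $r_i$ sees both $v_1$ and $v_2$.

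Next I would describe how $r_i$ computes its destination. Following the pattern of Lemma~\ref{lemma:1}, $r_i$ orders its visible red robots and identifies $v_1,v_2$ together with the first red robots $v''$ and $v'''$ that appear adjacent to $v_1$ and $v_2$ in the clockwise and counter-clockwise directions, respectively. These locally identified neighbours may differ from the true global neighbours $v'$ and $v_3$ because other interior or corner robots may block $r_i$'s view of them, so $r_i$ works with the angles $\alpha' = 180-\angle v'' v_1 v_2$ and $\beta' = 180-\angle v_1 v_2 v'''$ computed from its own view. The key geometric observation, exactly as in Lemma~\ref{lemma:1}, is that $\angle v'' v_1 v_2 \leq \angle v' v_1 v_2$ and $\angle v_1 v_2 v''' \leq \angle v_1 v_2 v_3$ (any occluding robot lies further inside the hull, pushing the locally visible angle down), so $\alpha' \leq 180-\angle v' v_1 v_2$ and $\beta' \leq 180-\angle v_1 v_2 v_3$. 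Hence every point $x$ that satisfies the bounds $\angle x v_1 v_2 \leq \alpha'/4$ and $\angle v_1 v_2 x \leq \beta'/4$ also satisfies the defining inequalities of $S(e)$.

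Having established that the locally computed zone $S'(e)$ is contained in $S(e)$, it remains to verify that the actual destination $r_i$ picks lies in $S'(e)$. The ID-phase rule moves $r_i$ perpendicularly across $e$ into the outside region, and the magnitude of the translation is chosen (as in the appendix pseudocode) so that the resulting point respects both angular bounds; since the edge is eligible, its length of at least $3$ gives $r_i$ enough room to cross without entering the cone-complements defined by $\alpha'/4$ and $\beta'/4$. Combined with Lemma~\ref{lemma:1}-style containment $S'(e)\subseteq S(e)$, the destination lies in $S(e)$, as required.

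The main obstacle is the view-discrepancy step in the second paragraph: one has to argue carefully that interior robots which occlude the true neighbours $v'$ or $v_3$ can only \emph{decrease} the angle $r_i$ sees at $v_1$ or $v_2$, so that the locally computed safe zone is a subset of the true one. Once this monotonicity is in hand, the remainder of the proof reduces to the same bookkeeping as in Lemma~\ref{lemma:1}, and the eligibility condition on $e$ supplies the clearance needed for the crossing to land in $S'(e)$.
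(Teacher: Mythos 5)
Your overall strategy---adapting the local-view argument of Lemma~\ref{lemma:1} and reducing the problem to showing that the locally computed zone $S'(e)$ is contained in the true $S(e)$---is the same one the paper uses (both are modelled on Lemma~3 of Di Luna et al.). However, the step you yourself flag as the crux contains a direction error that breaks the argument. You correctly observe that if the true hull neighbour $v'$ of $v_1$ is occluded and $r_i$ instead sees a more distant corner $v''$, then $\angle v'' v_1 v_2 \leq \angle v' v_1 v_2$ (the diagonal $\overline{v_1 v''}$ lies inside the hull). But from this it follows that $\alpha' = 180 - \angle v'' v_1 v_2 \geq 180 - \angle v' v_1 v_2$, not $\leq$ as you write. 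Since the safe zone is defined by $\angle x v_1 v_2 \leq (180 - \angle v' v_1 v_2)/4$, an overestimate of $\alpha$ makes the locally computed zone \emph{larger} than $S(e)$, so the containment $S'(e) \subseteq S(e)$ you need is exactly what your premise fails to deliver. The mechanism that makes Lemma~\ref{lemma:1} work is different: there the side robot measures the angle at its collinear neighbour $v$, a point on $e$ strictly between $v_1$ and $v_2$, and $\angle v' v v_2 \geq \angle v' v_1 v_2$ precisely because the apex has moved along the edge toward $v_2$; the measured angle is inflated, so $\alpha$ is underestimated and the computed zone shrinks. Occlusion of the far neighbour works in the opposite direction and cannot serve as the source of conservativeness.

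For comparison, the paper's own proof sidesteps the occlusion issue by taking the ordered visible corners to be $\{v', v_1, v_2, v_3\}$ with $v'$ and $v_3$ the actual hull neighbours, and instead concentrates on the case distinction between a single closest interior robot (which computes $S(e)$ exactly) and several equidistant closest robots moving simultaneously, noting that at most two of them pass through $e$ in a round. Your write-up omits that case split and the simultaneous-movement point entirely. To repair your argument you would need either to show that the true neighbours $v'$ and $v_3$ are in fact visible to the closest interior robot (so no view discrepancy arises), or to replace the occlusion claim with a local estimate of the interior angles at $v_1$ and $v_2$ that is monotone in the right direction.
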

\begin{proof}
We prove this lemma using the proof technique similar to the proof of Lemma 3 in~\cite{Luna2014}. When there is a single closest interior robot $r \in I_k$, $r$ can compute the region $S(e)$ and move to it, proving the lemma. Consider now the situation when there are at least two closest interior robots. Let $r_1$ and $r_2$ be two of these robots. Since we work in the fully synchronous setting, both $r_1$ and $r_2$ move at the same time. Consider only the move of $r_2$ (the move of $r_1$ follows similarly). Robot $r_2$ orders the corner robots that are visible to it according to its local notion of clockwise direction and let this ordering be $\{v_0,v_1,v_2,v_3\}$, where $v_0$ is a corner robot preceding $v_1$ in the clockwise direction and $v_3$ is the corner robot following $v_2$ in clockwise direction. Following the rules of our algorithm, $r_2$ computes $\alpha=180^\circ-\angle {v_0v_1v_2}$, $\beta=180^\circ-\angle v_1v_2v_3$, and $\delta=\min\{\alpha/4, \beta/4\}$ (see Figure~{\ref{fig:safe zone}}). We note that since we calculate $\alpha$ by subtracting $\angle {v_0v_1v_2}$ from $180^\circ$, $\alpha$ is in fact a lower bound on the actual angle that any robot in $I_k$ at the same distance from edge $e$ will compute. Let $x'$ be the nearest to $e$ in the safe zone outside the convex hull such that either $\angle {x'v_1v_2} = \delta$ or $\angle x'v_2v_3 = \delta$ and define $x=x'+\overline{r_2m}$, where $m$ is the intersection point of $e$. The same holds for $r_1$. Our algorithm guarantees that in every round at most two closest interior robots to an edge can move through this edge. 
\end{proof}

\begin{lemma}
\label{lemma:9}
Given any initial configuration $\cC_0$, no collisions of robots occur until $I_k=\emptyset$.
\end{lemma}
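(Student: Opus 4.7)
The plan is to argue phase by phase and movement-class by movement-class, relying on the structural lemmas already proved (Lemmas~\ref{lemma:1}--\ref{lemma:8}) plus the explicit path-check built into the interior-robot move rule. I will classify the simultaneous moves in a round into three types: (i) corner robots moving distance 1 along their outward angle bisector, (ii) side robots sliding into the safe zone of their incident edge, and (iii) interior robots traversing an eligible edge into its safe zone. I will show that, within each round, no pair of trajectories (as closed segments, not just endpoints) share a common point of distance less than 1 at the same time.

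For the Side Depletion phase I would first handle corner--corner interactions: since every corner moves strictly outward along its own bisector, the new hull strictly contains the old one, and the pairwise distances between corner robots do not decrease, so they remain at distance at least 1. Next, for side robots moving to safe zones, I would invoke Lemma~\ref{lemma:1} to place each mover inside $S(e)$; because $S(e)$ lies outside $\cH_k$ and the bisector moves of the two adjacent corners only enlarge this exterior region, the expanding corners stay clear of the side robot's arrival point and of its straight-line trajectory (which is entirely on the outside-facing side of $e$). The only pairs of side robots that move simultaneously on the same edge are the two extremal ones, $r_1$ near $v_1$ and $r_2$ near $v_2$; their safe-zone targets lie in disjoint wedges near opposite endpoints of $e$, so their segments do not come within unit distance. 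Side robots that remain stationary are at distance at least 1 from every mover both before and after the round because the movers retreat to the exterior. This yields no SD collisions, so Lemma~\ref{lemma:3} applies cleanly.

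For the Interior Depletion phase, the key is the definition of an eligible edge $e=\overline{v_1v_2}$: it has length at least 3 \emph{and} the mover is closest to $e$ among all robots. The ``closest'' condition guarantees that the straight perpendicular leg of the trajectory toward $e$ sweeps through a strip containing no other robot, so the intermediate move cannot hit any interior, side, or corner robot. When two interior robots traverse the same eligible edge simultaneously, Lemma~\ref{lemma:8} ensures they are the left- and right-extreme of their tie-class and land in safe-zone wedges near $v_1$ and $v_2$ respectively; the length bound $|e|\ge 3$ reserves a sub-interval of length at least $2$ in the middle of $e$ plus $0.5$ clearance from each of $v_1,v_2$ (as footnoted in Section~3.2), so their two perpendicular corridors through $e$ are separated by at least one unit throughout the move. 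Corner robots expanding the hull do so outward along the bisector at $v_1$ and $v_2$, i.e., away from both the perpendicular corridor through $e$ and from the wedge-shaped safe zones targeted by the interior movers; hence the expanding corners never come within unit distance of a traversing interior robot. Finally, the algorithm's explicit path check (Fig.~\ref{fig:10}(c)) aborts the move of any interior robot whose computed path would intersect another robot, so residual edge cases (e.g., a non-extreme tie-class robot that somehow qualifies) simply do not move.

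The main obstacle I expect is the careful bookkeeping of simultaneous trajectories: because all three movement classes act in the same round, I cannot just compare static ``before'' and ``after'' configurations; I have to argue that the closed segments swept out by every pair of robots are at perpendicular/Euclidean distance at least 1 for every $t\in[0,1]$ during the round. This is where the geometric slack built into the constants---the distance-1 corner expansion, the $\delta=\min\{\alpha/4,\beta/4\}$ in the safe-zone definition, and the length-3 eligibility threshold---must be combined into a single clearance argument. Once that clearance is established for each of the three interaction types above, the lemma follows by taking the union over all rounds $k$ with $I_k\ne\emptyset$ and invoking Lemma~\ref{lemma:4} to guarantee that interior movement only begins after all side robots have been resolved, so the SD-style and ID-style moves never overlap in the same round.
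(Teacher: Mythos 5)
Your proposal is correct and rests on the same ingredients as the paper's own proof: perpendicular motion toward the closest eligible edge, the length-3 threshold, disjointness of the safe zones $S(e)$ of distinct edges, and the explicit path check in the $Move$ subroutine. In fact your case analysis is considerably more complete than the paper's proof, which only addresses interior--interior collisions during the ID phase and leaves the corner-expansion, side-robot, and cross-class trajectory interactions (which you treat explicitly, and which the lemma as stated does cover since it starts from $\cC_0$) entirely implicit.
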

\begin{proof}
This lemma is proved by considering Algorithm~\ref{algorithm:2i}. An interior robot $r_i$ with light off does not collide with any other interior robot since the move of $r_i$ is perpendicular to the closest edge $\overline{r_1r_2}$ and there is sufficient space on the edge for the robot to move through it. The robots moving through different edges of $\cH_k$ do not collide since those robots are the closest robots to those edges because the $S(e)$ of different edges are disjoint.
\end{proof}

\begin{lemma}
\label{lemma:10}
There exists an integer $k \in \mathbb{N}^+$ such that the robots in $I_k$ closest to their eligible edge are able to move outside the convex hull $\cH_k$ and become corner robots with their light set to red.
\end{lemma}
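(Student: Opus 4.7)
The plan is to combine the continuous outward expansion of $\cH_k$ driven by the corner robots during the ID phase with the structural properties already established in Lemmas~\ref{lemma:6},~\ref{lemma:8}, and~\ref{lemma:9}. After the SD phase, the boundary $\partial\cH_k$ consists only of corner robots, each of which moves one unit per round along its outward angle bisector. A short geometric check shows that this expansion lengthens every edge of $\cH_k$ by a strictly positive amount per round (bounded below by a quantity of the form $\cos(\theta_1/2) + \cos(\theta_2/2)$ that depends on the interior angles $\theta_1,\theta_2 < 180^\circ$ at the endpoints of the edge). Consequently, there exists a finite round $k$ at which some edge $e$ of $\cH_k$ first reaches length at least $3$.

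At that round, I invoke Lemma~\ref{lemma:6} to obtain any interior robot $r_i \in I_k$ with a non-empty eligible set $Q$, and let $e \in Q$. By the definition of eligibility, $r_i$ attains the minimum perpendicular distance to $e$ among all interior robots, so the perpendicular corridor from $r_i$ to $e$ is free of other robots. The ID rule then directs $r_i$ to translate perpendicularly toward $e$, verify that its path is clear, and continue through $e$ into the safe zone $S(e)$. The length condition $|e| \geq 3$ combined with the $0.5$ slack at each endpoint (per the footnote accompanying the eligibility definition) guarantees that $r_i$, together with at most one symmetric partner crossing the same edge, can pass between the two corner robots of $e$ without collision. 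Lemma~\ref{lemma:8} certifies that the landing point lies inside $S(e)$, which is defined precisely so that every point of it becomes a new corner of the enlarged hull while the endpoints of $e$ remain corners, and Lemma~\ref{lemma:9} rules out collisions throughout the motion. Upon arrival, the ID rule sets the light of $r_i$ to Red; applying the same argument to every closest interior robot on every long-enough edge yields the lemma.

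The main obstacle I expect is formalising the hull-growth step, namely showing that some edge really does reach length at least $3$ in finite time; one must rule out any pathological configuration in which the outward displacements of two adjacent endpoints cancel along the edge between them. This is avoided because the outward angle bisector at a convex corner with interior angle $\theta < 180^\circ$ always has a positive projection onto the outward direction of each incident edge, so every edge strictly grows each round, and since only finitely many edges are present the longest one will reach length $3$ in finitely many rounds. A secondary technicality is that several interior robots may tie for the minimum perpendicular distance to $e$; in this case Lemma~\ref{lemma:8} permits only the two extreme robots among them to move through $e$ simultaneously, and Lemma~\ref{lemma:9} then ensures their paths do not interfere. Taken together, these observations give the desired round $k$ and the promised transition of the closest interior robots into red corner robots outside $\cH_k$.
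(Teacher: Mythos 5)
Your proof is correct and its core chain -- an eligible edge exists (Lemma~\ref{lemma:6}), the closest interior robot's perpendicular corridor is clear, it lands in $S(e)$ (Lemma~\ref{lemma:8}), no collisions occur (Lemma~\ref{lemma:9}), and rigidity guarantees it reaches the computed point and turns red -- matches the paper's. The difference is in scope: the paper's proof of this lemma only invokes Lemma~\ref{lemma:9}, Lemma~\ref{lemma:3} (no side robots can block after round one), and rigidity, and quietly defers the existence of the round $k$ (i.e., that some edge actually reaches length $3$) to the later Lemma~\ref{lemma:eligibleEdgesTotal}, which bounds the perimeter growth rather than the growth of an individual edge. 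You instead front-load that existence argument, which makes your proof more self-contained; the paper's decomposition keeps this lemma purely about ``able to move without obstruction'' and isolates the quantitative growth in the running-time analysis. One soft spot in your version: the per-round edge growth $\cos(\theta_1/2)+\cos(\theta_2/2)$ is positive but not obviously bounded away from zero, since the interior angles change as the hull expands and can approach $180^\circ$; ``every edge strictly grows each round, hence the longest reaches $3$ in finitely many rounds'' does not follow from strict positivity alone. This is repairable (e.g., the exterior angles sum to $360^\circ$, so $\sum_v \cos(\theta_v/2)=\sum_v\sin(\epsilon_v/2)$ is bounded below by a constant, giving constant perimeter growth per round and then a pigeonhole over the at most $n$ edges, which is essentially the paper's route in Lemma~\ref{lemma:eligibleEdgesTotal}), but as written the finiteness step needs this extra line.
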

\begin{proof}
By Lemma~\ref{lemma:9}, the robot $r_i$ does not collide with other interior robots while it tries to move toward the edge $\overline{v_1v_2}$ of $\cH_k$. Since there is no side robot after the first round by Lemma~\ref{lemma:3}, those cannot block $r_i$'s movement.
By Lemma~\ref{lemma:9}, there is no collision for robot $r_i$ while it passes $e = \overline{v_1v_2}$ where $v_1$ and $v_2$ are the endpoints of the edge that $r_i$ passes through to its computed point in $S(e)$. Since the movements are rigid, $r_i$ reaches its computed point in the safe zone once it moves and changes its color to red.
\end{proof}

\begin{lemma}
\label{lemma:11}
Given any initial configuration $\cC_0$, there exists an integer $k \in \mathbb{N}^+$ such that $I_k=\emptyset$ in $\cC_k$ and the corner robots do not move in any round $k'>k$.
\end{lemma}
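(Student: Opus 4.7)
The plan is to combine Lemma~\ref{lemma:3}, which collapses the SD phase in a single round, with an induction on $|I_k|$ during the ID phase, and then argue that once $I_k=\emptyset$ the corner robots can detect termination from their snapshot and stop moving. After round $1$, by Lemma~\ref{lemma:3} there are no side robots in $\cC_1$, so from round $2$ onward the algorithm is in the ID phase and $\cR$ consists only of corner robots (red) and interior robots (off). In every such round, each corner robot moves distance $1$ along its outward angle bisector, so the convex hull expands predictably.

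Next I would show that from some round $k_0 = O(1)$ onward there is always at least one edge of $\cH_k$ of length at least $3$. Because each corner moves outward by $1$ per round, an elementary geometric calculation shows that the length of every edge whose two endpoints are persistent corners grows by at least some positive constant per round, so already the edges of $\cH_1$ reach length $3$ in $O(1)$ rounds. Newly created edges (produced when an interior robot lands in the safe zone $S(e)$ of some edge $e$) can initially be short, but at least one long edge always persists, which is enough to invoke Lemma~\ref{lemma:6}: whenever $I_k\neq\emptyset$, some interior robot $r_i$ has a non-empty eligible set $Q$.

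By Lemma~\ref{lemma:8}, such a robot $r_i$ moves into the safe zone $S(e)$ of one of its eligible edges; by Lemma~\ref{lemma:9}, this move causes no collision; and by Lemma~\ref{lemma:10}, $r_i$ successfully becomes a corner robot outside $\cH_k$ and changes its light to red. Since the safe zones of distinct eligible edges are disjoint, this happens in parallel and at least one interior robot is removed each round, so $|I_{k+1}| \leq |I_k|-1$ whenever $I_k\neq\emptyset$. Because $|I_1|\leq n$, after at most $n$ additional rounds we reach a configuration $\cC_k$ with $I_k=\emptyset$, giving $k=O(n)$ overall.

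Finally, once $I_k=\emptyset$, every robot in $\cR$ is a corner of $\cH_k$ with its light set to red. With no interior robots obstructing line segments between corners, the configuration is obstruction-free, so each corner robot's snapshot is a convex polygon of red robots with itself at one vertex and no off-colored or collinear robot inside or on any edge. The algorithm uses exactly this as its termination condition: a red corner robot whose snapshot contains only red corners does not move. Hence no corner robot moves in any round $k'>k$. The main obstacle in this plan is the first geometric step: proving that the outward motion of corner robots gives a uniform lower bound on the growth rate of every edge (old or newly created) so that the length-$3$ hypothesis of Lemma~\ref{lemma:6} is reached in $O(1)$ rounds and maintained throughout the ID phase; once that is established, Lemmas~\ref{lemma:6}, \ref{lemma:8}, \ref{lemma:9}, and \ref{lemma:10} make the rest of the argument essentially a counting exercise.
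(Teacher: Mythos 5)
Your overall structure matches the paper's proof of Lemma~\ref{lemma:11}: combine Lemmas~\ref{lemma:6}, \ref{lemma:8}, \ref{lemma:9}, and~\ref{lemma:10} to show that interior robots are steadily converted into corners, and then observe that once $I_k=\emptyset$ every corner robot sees only red lights and therefore terminates. That part is fine and is essentially what the paper does.

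The genuine gap is exactly the step you flag as the ``main obstacle,'' and as stated it is false. There is no uniform positive lower bound on the per-round growth of an individual edge: if the interior angle $\theta$ at a corner is close to $180^\circ$, the component of that corner's unit move along the edge direction is $\cos(\theta/2)$, which tends to $0$. For a configuration whose hull is close to a regular $m$-gon with many nearly-flat corners, each edge grows by only $\Theta(1/m)$ per round, so reaching length $3$ takes $\Theta(m)$ rounds, not $O(1)$. Consequently your claims that ``the edges of $\cH_1$ reach length $3$ in $O(1)$ rounds'' and that $|I_{k+1}|\leq|I_k|-1$ in \emph{every} round cannot be justified this way. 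The paper avoids this by arguing in aggregate (Lemma~\ref{lemma:eligibleEdgesTotal}): the total perimeter grows by at least $1$ per round, and once the perimeter exceeds $3n$ some edge must have length at least $3$, which happens within $O(n)$ rounds. Since Lemma~\ref{lemma:11} only asserts the existence of some finite round $k$ with $I_k=\emptyset$ (the $O(n)$ round count is handled separately in Lemma~\ref{lemma:18}), replacing your per-edge constant-growth claim with this perimeter argument repairs the proof without changing anything else; progress then occurs at least once every $O(n)$ rounds rather than every round, which still yields termination.
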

\begin{proof}
When $I_k \neq \emptyset$ each corner robot sees at least one robot with light off. Therefore, combining the results of Lemmas~\ref{lemma:6},~\ref{lemma:8},~\ref{lemma:9}, and~\ref{lemma:10} with this observation, we have that, given any $\cC_0$, there is some round $k \in \mathbb{N}^+$ such that $I_k=\emptyset$. 

Corner robots do not move after $I_k=\emptyset$, since they do not see robots with light off, thus terminating.
\end{proof}

\begin{theorem}
\label{lemma:12}
Given any initial configuration $\cC_0$, there is some round $k \in \mathbb{N}^+$ such that all robots lie on $\cH_k$ and have their lights set to red.
\end{theorem}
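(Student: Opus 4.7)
The plan is to assemble the preceding lemmas into a single termination argument. First, I would invoke Lemma~\ref{lemma:3}: after exactly one round, the Side Depletion phase ends, so from round $2$ onward the configuration contains only corner robots (with red lights) and interior robots (with lights off). This reduces the theorem to showing that eventually every interior robot becomes a corner robot, and that the total number of rounds is linear in $n$.

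Next, I would track the number of interior robots $I_k$ round by round. Each round, every corner robot moves distance $1$ outward along its angle bisector, so edge lengths of $\cH_k$ grow monotonically, and every edge eventually reaches length at least $3$. Once this holds, Lemma~\ref{lemma:6} guarantees some interior robot has a non-empty eligible set $Q$; Lemma~\ref{lemma:8} shows that any such robot which moves lands in the safe zone $S(e)$ of its chosen eligible edge; Lemma~\ref{lemma:10} gives that at least one interior robot successfully passes through $e$ and sets its light to red, thereby becoming a corner; and Lemma~\ref{lemma:9} certifies no collisions occur during all of this. Consequently, $I_k$ strictly decreases in every round in which $I_k > 0$ (and edges are long enough), and since $I_0 \le n - 3$, the quantity $I_k$ reaches $0$ within $O(n)$ rounds. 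Lemma~\ref{lemma:11} then closes the argument: once $I_k = \emptyset$, no corner robot sees a light-off robot, so the corner robots remain stationary thereafter.

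At the terminating round $k$, every robot lies on $\cH_k$ (every robot is a corner) and every robot has its light set to red, because each robot set its light to red at the moment it became a corner — either during the SD phase (Lemma~\ref{lemma:2}) or during the ID phase (Lemma~\ref{lemma:10}) — and these lights are persistent. Combined with the $O(n)$ bound on $I_k$ reaching zero, this yields the claim.

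The main obstacle is the linear-rounds bound. The preceding lemmas establish that \emph{eventually} all interior robots leave the interior, but to get $k \in n^+$ one must verify that no "stall" can occur: in every round after the edges have expanded past length~$3$, at least one interior robot actually becomes a corner. This requires combining the guarantees of Lemmas~\ref{lemma:6}, \ref{lemma:8}, \ref{lemma:9}, and~\ref{lemma:10} carefully with the predictable expansion of $\cH_k$ by the corners, taking care that simultaneous moves by multiple closest interior robots on different edges (and up to two on the same edge) do not interfere — which is exactly what Lemmas~\ref{lemma:8} and~\ref{lemma:9} already encode.
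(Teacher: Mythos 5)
Your proposal follows essentially the same route as the paper: it combines Lemmas~\ref{lemma:3}, \ref{lemma:6}, \ref{lemma:8}, \ref{lemma:9}, and~\ref{lemma:10} (as packaged by Lemma~\ref{lemma:11}) to conclude that the interior eventually empties, and then observes that each robot set its persistent light to red upon becoming a corner. The only difference is that you treat the linear round bound as part of the obligation here, whereas the paper defers that quantitative analysis to Lemma~\ref{lemma:eligibleEdgesTotal}, Lemma~\ref{lemma:18}, and Theorem~\ref{lemma:20}; this theorem only asserts termination in some finite round.
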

\begin{proof}
Lemma~\ref{lemma:11} shows that there exists a round $k$ such that there are no interior robots left. 
Interior robots that moved to become corner robots changed their lights to red as soon as they reached their corner positions. Furthermore, the interior robots move to the safe zone where they by definition become corners. Since Lemma~\ref{lemma:11} guarantees that there are no collisions, the robots occupy different positions of $\cH_k$ and all their lights will be red.
\end{proof}

Next, we argue that the robots can determine when there are no interior robots left. 

\begin{lemma}
\label{lemma:14}
If there exists a robot with light off, there is at least one interior robot that is visible to any corner robot $r_i$.
\end{lemma}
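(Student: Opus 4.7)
The plan is to fix an arbitrary corner robot $r_i$ and exhibit a specific off-light robot that must be visible to $r_i$. Since Lemma~\ref{lemma:14} is invoked in the ID phase (Lemma~\ref{lemma:3} has already eliminated all side robots), the off-light robots at this point are exactly the interior robots. So it suffices to show that if $I_k \neq \emptyset$, then every corner robot $r_i$ sees at least one robot in $I_k$.

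My approach is to pick the interior robot $r_j \in I_k$ that minimizes $\dist(r_i, r_j)$ and argue by contradiction that $r_i$ sees $r_j$. Suppose $r_j$ is not visible to $r_i$. Then some third robot $r_k$ obstructs every line-of-sight pair on the bounding circles of $r_i$ and $r_j$; in particular, standard fat-robot arguments imply that the center of $r_k$ is essentially on the open segment $\overline{r_i r_j}$, so $\dist(r_i, r_k) < \dist(r_i, r_j)$. I would then split into two cases. If $r_k$ is an interior robot, then $r_k \in I_k$ with strictly smaller distance to $r_i$, contradicting the minimality of $r_j$. If $r_k$ is a corner robot, its center lies on $\partial \cH_k$; but the open segment from the boundary point $r_i$ to the interior point $r_j$ lies entirely in $\cH_k \setminus \partial \cH_k$ (since a convex hull's boundary meets any segment from a boundary point to an interior point only at the boundary endpoint), so no corner center can lie strictly between $r_i$ and $r_j$. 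Either case yields a contradiction, so $r_i$ must see $r_j$.

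The main obstacle I anticipate is the fat-robot visibility definition: a blocker $r_k$ need not have its center exactly on the center-to-center segment, it only needs its unit disk to obstruct every tangent line of sight between the disks of $r_i$ and $r_j$. To handle this cleanly I would either (i) replace the minimization by the interior robot with the smallest angular width gap along the segment, or (ii) observe that since all inter-robot center distances are at least $1$ and each disk has radius $\tfrac{1}{2}$, any robot $r_k$ whose disk blocks every sightline from $r_i$ to $r_j$ must have its center within distance $\tfrac{1}{2}$ of the segment $\overline{r_i r_j}$, and in particular strictly inside the hull once we move away from $r_i$ — so the boundary/interior dichotomy still forces $r_k$ to be an interior robot, preserving the contradiction with the minimality of $r_j$.
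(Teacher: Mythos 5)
Your proof takes essentially the same approach as the paper, whose entire argument is the one-sentence assertion that each corner robot can see some interior robot, ``for example the one closest to it''; you are simply supplying the closest-robot/blocker-contradiction details that the paper omits. The fat-robot subtlety you flag (a blocker's center need only lie within distance $\tfrac12$ of the sightline, so ruling out corner robots as blockers near the hull boundary needs a little care) is real, but the paper does not address it at all, so your treatment is if anything more careful than the original.
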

\begin{proof}
If there is at least one interior robot, every corner robot can see some interior robot (for example the one closest to it). By definition, every interior robot has its light off, proving the lemma. 
\end{proof}

\begin{lemma}
\label{lemma:15}
Given a robot $r_i \in \cR$ with its light set to red and a round $k \in \mathbb{N}^+$, if all robots in $\cC_k(r_i)$ have their light set to red, and no robot is in the interior of $\cH_k(r_i)$, then $\cC_k$ does not contain interior robots.
\end{lemma}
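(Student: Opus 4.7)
My plan is to derive the statement directly from Lemma~\ref{lemma:14}, used contrapositively. Because $r_i$ carries a red light, by the colour convention of our algorithm (red $=$ corner) $r_i$ is a corner robot of $\cH_k$. The hypothesis that every robot in $\cC_k(r_i)$ has its light set to red then says, in effect, that among the robots $r_i$ is able to see, none has its light off.

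I would then invoke Lemma~\ref{lemma:14}: whenever some robot in $\cC_k$ has its light off, at least one such robot is visible to every corner robot, and in particular to $r_i$. Its contrapositive, applied to the corner robot $r_i$ together with the hypothesis just extracted, forces the global configuration $\cC_k$ to contain no off-light robot at all. Since Lemma~\ref{lemma:3} removes side robots after one round, and $k\in n^+$ places us past that round, every off-light robot in $\cC_k$ would necessarily be an interior robot; hence no interior robot exists in $\cC_k$, which is precisely what we want.

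The auxiliary hypothesis that no robot lies in the interior of $\cH_k(r_i)$ plays a sanity-check role, ruling out the pathological possibility that an off-light robot sits geometrically inside $r_i$'s local hull while being occluded from $r_i$'s direct view by other robots; together with the all-red visibility condition it guarantees that $r_i$'s local picture really is ``all corners and nothing else.'' The only conceptual hurdle is the passage from the local view at $r_i$ to the global configuration $\cC_k$, and this is exactly what Lemma~\ref{lemma:14} was designed to supply, so the proof reduces to the short contrapositive argument above with no routine calculation required.
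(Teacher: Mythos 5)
Your proof takes essentially the same route as the paper: both arguments hinge on the contrapositive of Lemma~\ref{lemma:14}, turning ``no off-light robot is visible to the corner robot $r_i$'' into ``no off-light robot exists globally.'' One sentence of yours has the final implication backwards, though: to conclude that $\cC_k$ has no interior robots you need ``every \emph{interior} robot has its light off'' (which holds by the colour convention, and is what the paper states), not ``every off-light robot is an interior robot''; with that direction corrected, your argument coincides with the paper's.
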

\begin{proof}
When all the robots in $\cC_k(r_i)$ have their light set to red, this means that there is no robot with light off. Since any interior robot would have color off and by Lemma~\ref{lemma:14} at least one of these robots would be visible to $r_i$, this proves the lemma. 
\end{proof}

We are now ready to prove that the {\mv} problem is solvable using only two colors. 
Let $\cC_{ID}$ denote the configuration of robots after the ID phase is finished and let $\cH_{ID}$ be the convex hull created by $\cC_{ID}$.

\begin{theorem}
\label{lemma:16}
The {\mv} problem is solvable without collisions for unit disk robots in the fully synchronous setting using two colors in the robots with lights model.
\end{theorem}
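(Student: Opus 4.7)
The plan is to tie together the preceding lemmas to verify the three assertions in the theorem: correctness (all robots become corners of a common convex hull so they are mutually visible), collision-freedom, and the two-color bound. The color bound is immediate from the algorithm's construction, since only $\bC=\{\textit{Off},\textit{Red}\}$ ever appears, so the substantive content is correctness and collision avoidance.

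First I would apply Lemma~\ref{lemma:3} to conclude that after one round of the Side Depletion phase, $\cC_{SD}$ contains only corner robots (with light red) and interior robots (with light off); no side robot remains. Then for the Interior Depletion phase I would combine Lemmas~\ref{lemma:6}, \ref{lemma:8}, \ref{lemma:9}, and \ref{lemma:10} (as already packaged in Lemma~\ref{lemma:11}) to argue that in every round with $I_k \neq \emptyset$ at least one interior robot reaches the safe zone of some eligible edge, becomes a corner, and turns red. Since this strictly decreases $|I_k|$, after $O(n)$ rounds we reach $I_k = \emptyset$, which is the obstruction-free configuration $\cH_{ID}$ announced in Theorem~\ref{lemma:12}. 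For collision-freedom I would cite Lemmas~\ref{lemma:1} and~\ref{lemma:2} for the SD phase and Lemma~\ref{lemma:9} for the ID phase, so no two bounding circles ever share a point. Finally, to argue that the robots can actually detect completion, I would invoke Lemma~\ref{lemma:15} (supported by Lemma~\ref{lemma:14}): a red corner robot that sees only red lights and no robot in the interior of its local hull safely concludes $I_k = \emptyset$, so by Lemma~\ref{lemma:11} corner robots stop moving precisely when the global configuration is obstruction-free, and the algorithm terminates in the desired state.

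The main obstacle is cleanly justifying the ID-phase progress invariant, namely that as long as interior robots remain, at least one of them actually moves outward in each round. The delicate issue is that the eligible-edge predicate requires an edge of length at least~3, so one must rule out a scenario in which the hull's edges stay short indefinitely or in which several interior robots are tied for closest to every long edge and none moves. I would address this by observing that the corner robots expand the hull by a unit distance along the angle bisector every round, so edge lengths grow monotonically and in particular exceed~3 after a bounded number of warm-up rounds. Once an edge is long enough, Lemma~\ref{lemma:6} supplies an interior robot with a nonempty eligible-edge set $Q$, Lemma~\ref{lemma:8} shows it moves inside the corresponding safe zone, and Lemma~\ref{lemma:10} shows it reaches the safe zone, becomes a corner, and turns red without colliding. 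Iterating, $|I_k|$ strictly decreases to zero in $O(n)$ rounds, which together with the collision and termination lemmas establishes the theorem.
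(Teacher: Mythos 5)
Your proposal follows essentially the same route as the paper for the generic case: Lemma~\ref{lemma:3} for the SD phase, Theorem~\ref{lemma:12} (via Lemmas~\ref{lemma:6}--\ref{lemma:11}) for the ID phase, Lemma~\ref{lemma:9} for collision avoidance, and Lemmas~\ref{lemma:14} and~\ref{lemma:15} for local detection of termination. Your extra discussion of the progress invariant --- that the corner robots' unit-length expansion moves guarantee edges of length at least $3$ eventually exist, so Lemma~\ref{lemma:6} always supplies a mover --- is sound and in fact anticipates the argument the paper defers to Lemmas~\ref{lemma:eligibleEdgesTotal} and~\ref{lemma:18}.

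However, there is a genuine gap: the theorem claims solvability from \emph{any} initial configuration, and your argument only covers non-collinear ones. Lemma~\ref{lemma:3} presupposes a configuration with a proper convex hull having corner and side robots; when all robots lie on a single line, $\cH_0$ is a degenerate segment, the corner/side/interior classification by angles breaks down, and the SD phase as you invoke it does not apply. The paper's proof spends roughly a third of its length on exactly this: a case analysis for $n\leq 3$ (where the robots must recognize termination directly from the red lights of their at most two visible neighbors), and for $n\geq 4$ an argument that the two endpoint robots of the segment move orthogonally to $\cH_0$, after which the configuration is no longer collinear, never reverts to being a line, and the generic analysis takes over. Without this step your proof does not establish the theorem as stated; you would need to add the collinear bootstrap (or explicitly restrict the claim to non-collinear $\cC_0$, which is weaker than what is asserted).
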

\begin{proof}
We have from Lemma~\ref{lemma:3} that from any initial non-collinear $\cC_0$, we reach a configuration $\cC_{SD}$ without side robots after one round, some becoming corner robots and some becoming interior robots. 
Once the SD phase is over, Theorem~\ref{lemma:12} shows that the ID phase moves all interior robots to become corner robots. 
We have from Lemma~\ref{lemma:15} that robots can locally detect whether the ID phase is over and configuration $\cC_{ID}$ is reached.
By Lemma~\ref{lemma:9}, no collisions occur in the SD and ID phases.

Therefore, starting from any non-collinear configuration $\cC_0$, all robots eventually become corners of the convex hull, solving the {\mv} problem without collisions. 

It remains to show that starting from any initial collinear configuration $\cC_0$ the robots correctly evolve into some non-collinear configuration from which we can apply the above analysis. If $n \leq 3$, this can be shown through a simple case analysis: For $n=1$, when the only robot becomes active, it sees no other robot, changes its color to red and immediately terminates. For $n=2$, robot $r_i$ changes its color to red when it becomes active for the first time and moves orthogonal to line $\overline{r_ir_j}$ that connects it to the only other robot $r_j$ it sees in $\cC(r_i)$. When $r_i$ later realizes that $|\cC(r_i)|$ is still 2 and $r_j.light=red$, it simply terminates. For $n=3$, when $r_i$ realizes that both of its neighbors in $\cC(r_i)$ have light set to red and are collinear with it, it moves orthogonal to that line and sets its light to red. The next time it becomes active, it finds itself at one of the corners and simply terminates as it sees all the other robots in the corners of the hull with light set to red.

For $n \geq 4$, let $a$ and $b$ be the two robots that occupy the corners of the line segment $\cH_0$ (i.e. the endpoint robots of $\cH_0$). Nothing happens until $a$ or $b$ is activated, setting its light to red, and moving orthogonal to $\cH_0$. After $a$ or $b$ moves, when another robot becomes active, it realizes that the configuration is not a line anymore and enters the normal execution of our algorithm. It is easy to see that after the line segment $\cH_0$ evolves into a polygonal shape, it never reverts to being a line. 

Finally, since our algorithm uses only two colors, the theorem follows. 
\end{proof}

It remains to analyze the number of rounds needed by our algorithm.

\begin{lemma}
\label{lemma:eligibleEdgesTotal}
After $O(n)$ rounds, the convex hull has grown enough in size to allow all $n$ robots to become corners. 
\end{lemma}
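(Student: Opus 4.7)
The plan is to lower bound the perimeter growth of the convex hull by a positive constant per round, and then use a pigeonhole argument over the at most $n$ edges to conclude that an eligible edge (length at least $3$) always exists once enough rounds have elapsed.

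First I would analyze how a single edge $e=\overline{v_1v_2}$ of $\cH_k$ with interior angles $\theta_1,\theta_2$ at its endpoints evolves in one round. When $v_1$ takes its unit step along the outward angle bisector, a short trigonometric computation (using that the bisector direction is $-(\vec{u}_{10}+\vec{u}_{12})/\|\vec{u}_{10}+\vec{u}_{12}\|$ with $\|\vec{u}_{10}+\vec{u}_{12}\|=2\cos(\theta_1/2)$) shows that its step has component $\cos(\theta_1/2)$ in the direction $-\overrightarrow{v_1v_2}$, and symmetrically for $v_2$. Writing $c:=\cos(\theta_1/2)+\cos(\theta_2/2)$ and $L:=|v_1v_2|$, the new edge length then satisfies $|v_1'v_2'|^2\geq L^2+2Lc$, and since the non-collision invariant $\dist(r_i,r_j)\geq 1$ ensures $L\geq 1\geq c/4$, we obtain $|v_1'v_2'|\geq L+c/2$. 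Summing this over all edges and noting that each corner appears in exactly two edges, the perimeter grows by at least $\sum_i\cos(\theta_i/2)$ per round.

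Next I would combine this with the identity $\sum_i(\pi-\theta_i)=2\pi$, valid for any convex polygon, and the elementary inequality $\cos(\theta/2)=\sin((\pi-\theta)/2)\geq(\pi-\theta)/\pi$ on $[0,\pi]$. Summing over corners gives $\sum_i\cos(\theta_i/2)\geq 2$, so the perimeter grows by at least a fixed positive constant every round, independently of how many corners are currently present. Since at most $n$ corners ever coexist, there are at most $n$ edges at every moment, so by pigeonhole the longest edge is always at least (perimeter)$/n$ in length. Once the perimeter exceeds $3n$, which happens within $O(n)$ rounds starting from any positive initial perimeter, some edge has length at least $3$, and by Lemma~\ref{lemma:6} some interior robot then has a non-empty eligible edge set, which by Lemma~\ref{lemma:10} translates into that robot successfully becoming a corner.

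The main subtlety I expect is that whenever an interior robot converts to a corner it splits one edge into two potentially short new edges, so short edges can keep reappearing even after the hull has grown substantially. This is handled by observing that both the perimeter lower bound and the $n$ upper bound on the edge count are insensitive to the splitting process: the perimeter only ever increases (splits strictly enlarge total perimeter by the triangle inequality, and expansion adds a further constant), while the edge count remains bounded by $n$ throughout. Hence once perimeter $\geq 3n$ has been reached, the pigeonhole step continues to produce an edge of length $\geq 3$ in every subsequent round, each such round reduces $|I_k|$ by at least one, and the at most $n$ remaining interior robots are absorbed as corners within $O(n)$ further rounds, giving the claimed linear bound.
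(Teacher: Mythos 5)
Your proposal is correct and follows essentially the same route as the paper: a constant lower bound on the per-round growth of the hull's perimeter, combined with a pigeonhole argument over the at most $n$ edges to force an edge of length at least $3$ once the perimeter exceeds $3n$, which takes $O(n)$ rounds. The only substantive difference is that you actually derive the per-round growth constant (obtaining $2$ via the bisector-component and Jordan-inequality computation), whereas the paper simply asserts that the perimeter grows by at least $1$ per round.
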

\begin{proof}
  Since in every round all corner robots move a distance of 1 along the bisector of their exterior angle, the length of the convex hull grows by at least 1 in every round. Note that when a robot becomes a corner, it moves outside the current convex hull and thus, by triangle inequality, extends the hull that way as well. 
  
  Hence, after at most $4n$ rounds the convex hull is long enough to ensure that there is space for all interior robots: there are at most $n$ edges of the convex hull and for each of them to \emph{not} be long enough, their total length is strictly less than $3n$. Hence, by expanding the convex hull by a total of $4n$, we ensure that there is enough space for each of the less than $n$ interior robots of diameter 1. Expanding the convex hull a total of $4n$ takes $O(n)$ rounds, completing the proof. 
\end{proof}

We note that for the above lemma the corner robots do not need to know $n$, as they can simply keep moving until the algorithm finishes. 

\begin{lemma}
\label{lemma:18}
The Interior Depletion phase of the mutual visibility algorithm finishes in $O(n)$ rounds.
\end{lemma}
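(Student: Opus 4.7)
The plan is to bound the number of rounds in the Interior Depletion phase by $O(n)$ by combining the perimeter-growth guarantee of Lemma~\ref{lemma:eligibleEdgesTotal} with the per-round absorption guarantees of Lemmas~\ref{lemma:6},~\ref{lemma:9}, and~\ref{lemma:10}. The key observation is that in every round each corner robot moves distance $1$ along its exterior angle bisector, so the perimeter of $\cH_k$ strictly grows by at least $1$ per round and never shrinks; moreover, adding a new corner via the safe-zone move of an interior robot also strictly increases the perimeter by the triangle inequality. Therefore, starting from the configuration $\cC_{SD}$, after at most $O(n)$ rounds the perimeter reaches $3n$ and from then on remains at least $3n$.

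Once the perimeter is at least $3n$ and the hull has at most $n$ corner robots (and hence at most $n$ edges), a pigeonhole argument guarantees that at least one edge of $\cH_k$ has length $\geq 3$. By Lemma~\ref{lemma:6}, whenever such an edge exists and $I_k$ is non-empty, there is at least one interior robot whose eligible set $Q$ is non-empty. By Lemma~\ref{lemma:10}, that robot successfully passes through its eligible edge into the safe zone outside the hull, becomes a new corner, and sets its light to red in the same round; by Lemma~\ref{lemma:9} this movement is collision-free. Consequently, once the perimeter exceeds $3n$, the quantity $|I_k|$ strictly decreases by at least $1$ in each round, so this regime contributes at most $|I_0| \leq n$ additional rounds before $I_k = \emptyset$.

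Adding the two bounds yields $O(n)$ rounds in total for the ID phase. The main subtlety I would have to check is that each absorption of an interior robot into the hull splits one hull edge into two, increasing the edge count and possibly producing new edges of length less than $3$. This, however, does not break the pigeonhole argument, since the number of edges is always at most $n$ while the perimeter is monotonically non-decreasing (and in fact growing by at least $1$ per round); hence an edge of length $\geq 3$ always remains available as long as some interior robot has not yet been absorbed, which is exactly what is needed to apply Lemma~\ref{lemma:6} round after round.
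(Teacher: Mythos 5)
Your proposal is correct and follows essentially the same route as the paper: it combines the hull-growth bound of Lemma~\ref{lemma:eligibleEdgesTotal} with the per-round progress guarantees of Lemmas~\ref{lemma:6}, \ref{lemma:9}, and~\ref{lemma:10} to conclude that fewer than $n$ interior robots are absorbed within $O(n)$ rounds. You are in fact somewhat more careful than the paper's own argument, making explicit both the pigeonhole step (perimeter $\geq 3n$ over at most $n$ edges forces an edge of length $\geq 3$) and the edge-splitting subtlety, both of which the paper leaves implicit.
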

\begin{proof}
When an interior robot can move outside the convex hull to become a corner robot, it needs at most a constant rounds to do so. During those rounds the robot becomes active, checks its path while moving to the safe zone to become a corner robot, and changes its light to red. There are fewer than $n$ interior robots and by Lemma~\ref{lemma:6} at least one robot can move when there is an edge of length at least 3. By Lemma~\ref{lemma:eligibleEdgesTotal} in $O(n)$ rounds there are sufficient long edges to allow the less than $n$ interior robots to move through them. Therefore, the Interior Depletion phase of the mutual visibility algorithm finishes in $O(n)$ rounds.
\end{proof}

We now have the following theorem bounding the running time of our algorithm using Lemmas~\ref{lemma:3} and~\ref{lemma:18} and Theorem~\ref{lemma:16}.

\begin{theorem}
\label{lemma:20}
Our algorithm solves the {\mv} problem for unit disk robots in $O(n)$ rounds without collisions in the fully synchronous setting using two colors. 
\end{theorem}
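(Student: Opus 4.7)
The plan is to assemble Theorem~\ref{lemma:20} directly from the three cited results: Theorem~\ref{lemma:16} takes care of correctness (that the final configuration is obstruction-free), collision-freedom, and the color budget, while Lemmas~\ref{lemma:3} and~\ref{lemma:18} combine to give the $O(n)$ round bound. Since each ingredient has already been proved, the work here is really just bookkeeping: show that the two phases (plus the special-case preamble) compose into a single linear-time execution.

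First I would dispose of correctness. I invoke Theorem~\ref{lemma:16} to conclude that, starting from any initial configuration $\cC_0$, the algorithm terminates in an obstruction-free configuration $\cC_{ID}$ in which every robot is a red corner of $\cH_{ID}$, without any collision, using only the color set $\bC=\{\emph{Off},\emph{Red}\}$. This immediately covers the qualitative claims of the theorem (solvability, no collisions, two colors) and also settles the collinear special cases, which Theorem~\ref{lemma:16} handles explicitly.

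Next I would handle the round count. The execution decomposes into (i) a possible constant-round preprocessing step that breaks a collinear initial configuration, (ii) the Side Depletion phase, and (iii) the Interior Depletion phase. Step (i) contributes only $O(1)$ rounds by the special-case analysis in Theorem~\ref{lemma:16}. By Lemma~\ref{lemma:3}, after step (i) the SD phase completes in a single round, turning every side robot into either a corner or an interior robot. By Lemma~\ref{lemma:18}, the subsequent ID phase completes in $O(n)$ rounds, at which point (by Lemma~\ref{lemma:15}) the remaining red corner robots detect termination and stop. Summing $O(1)+1+O(n)=O(n)$ gives the claimed bound.

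The main thing to be careful about is simply that the phases chain correctly, i.e.\ that the round counts really do add rather than interact in some pathological way. Lemma~\ref{lemma:4} ensures interior robots do not start moving until the SD phase is over, so the phases are genuinely sequential from the robots' local point of view; and Lemma~\ref{lemma:eligibleEdgesTotal} (used inside Lemma~\ref{lemma:18}) shows that the concurrent expansion of the convex hull by the corner robots neither delays nor conflicts with the ID phase. With that noted, the proof is a two-line combination of Lemmas~\ref{lemma:3} and~\ref{lemma:18} together with Theorem~\ref{lemma:16}, and no further calculation is needed.
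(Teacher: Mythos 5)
Your proposal is correct and matches the paper's approach exactly: the paper states Theorem~\ref{lemma:20} as an immediate consequence of Lemmas~\ref{lemma:3} and~\ref{lemma:18} together with Theorem~\ref{lemma:16}, which is precisely the combination you carry out. Your additional remarks on the constant-round special cases and on Lemma~\ref{lemma:4} ensuring the phases are sequential are consistent with (and slightly more explicit than) the paper's one-line justification.
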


\section{Concluding Remarks}
We studied the {\mv} problem for a system of autonomous fat robots of unit disk size in the robots with lights model. We described an algorithm for this problem that uses two colors and works for fully synchronous computation of fat robots under rigid movements. Our solution is optimal with respect to the number of colors used since even for point robots at least two colors are required~\cite{Sharma2015}. Also, our algorithm solves the {\mv} problem in $O(n)$ rounds. For future work, it is interesting to extend our algorithm for non-rigid movements of robots and also for semi-synchronous and asynchronous computations.

\bibliographystyle{plain}
\bibliography{references}

@inproceedings{Agathangelou2013,
 author = {Agathangelou, Chrysovalandis and Georgiou, Chryssis and Mavronicolas, Marios},
 title = {A Distributed Algorithm for Gathering Many Fat Mobile Robots in the Plane},
 booktitle = {Proceedings of the 2013 ACM Symposium on Principles of Distributed Computing},
 year = {2013},
 pages = {250--259},
 url={https://doi.org/10.1145/2484239.2484266}
}

@inproceedings{Luna2014,
  author    = {Di Luna, Giuseppe Antonio and
               Paola Flocchini and
               Sruti Gan Chaudhuri and
               Nicola Santoro and
               Giovanni Viglietta},
  title     = {Robots with Lights: Overcoming Obstructed Visibility Without Colliding},
  booktitle = {Proceedings of the 16th International Symposium on Stabilization, Safety, and Security of Distributed Systems},
  pages     = {150--164},
  year      = {2014},
  url={https://doi.org/10.1007/978-3-319-11764-5_11}
}

@article{Czyzowicz2009,
 author = {Czyzowicz, Jurek and Gasieniec, Leszek and Pelc, Andrzej},
 title = {Gathering Few Fat Mobile Robots in the Plane},
 journal = {Theoretical Computer Science},
 volume = {410},
 number = {6-7},
 year = {2009},
 issn = {0304-3975},
 pages = {481--499},
 numpages = {19},
 url={https://doi.org/10.1016/j.tcs.2008.10.005}
}

@inproceedings{Bolla2012,
 author = {Bolla, K\'{a}lm\'{a}n and Kovacs, Tam\'{a}s and Fazekas, G\'{a}bor},
 title = {Gathering of Fat Robots with Limited Visibility and Without Global Navigation},
 booktitle = {Proceedings of the 2012 International Symposium on Swarm Intelligence and Differential Evolution},
 year = {2012},
 isbn = {978-3-642-29352-8},
 location = {Zakopane, Poland},
 pages = {30--38},
 numpages = {9},
 url={https://doi.org/10.1007/978-3-642-29353-5_4}
}

@article{Chaudhuri15,
  author    = {Sruti Gan Chaudhuri and
               Krishnendu Mukhopadhyaya},
  title     = {Leader election and gathering for asynchronous fat robots without
               common chirality},
  journal   = {Journal of Discrete Algorithms},
  volume    = {33},
  pages     = {171--192},
  year      = {2015},
  url={https://doi.org/10.1016/j.jda.2015.04.001}
}

@article{Cohen2008,
 author = {Cohen, Reuven and Peleg, David},
 title = {Local Spreading Algorithms for Autonomous Robot Systems},
 journal = {Theoretical Computer Science},
 volume = {399},
 number = {1-2},
 year = {2008},
 pages = {71--82},
 url={https://doi.org/10.1016/j.tcs.2008.02.007}
}

@inproceedings{Cord-Landwehr2011,
  author    = {Andreas Cord-Landwehr and
               Bastian Degener and
               Matthias Fischer and
               Martina H{\"u}llmann and
               Barbara Kempkes and
               Alexander Klaas and
               Peter Kling and
               Sven Kurras and
               Marcus M{\"a}rtens and
               Friedhelm {Meyer auf der Heide} and
               Christoph Raupach and
               Kamil Swierkot and
               Daniel Warner and
               Christoph Weddemann and
               Daniel Wonisch},
  title     = {Collisionless Gathering of Robots with an Extent},
  booktitle = {Proceedings of the 37th Conference on Current Trends in Theory and Practice of Computer Science},
  year      = {2011},
  pages     = {178--189},
  url={https://doi.org/10.1007/978-3-642-18381-2_15}
}

@inproceedings{Dutta2012,
  author    = {Ayan Dutta and
               Sruti Gan Chaudhuri and
               Suparno Datta and
               Krishnendu Mukhopadhyaya},
  title     = {Circle Formation by Asynchronous Fat Robots with Limited
               Visibility},
  booktitle = {Proceedings of the 8th International Conference on Distributed Computing and Internet Technology},
  year      = {2012},
  pages     = {83-93},
  url={https://doi.org/10.1007/978-3-642-28073-3_8}
}

@inproceedings{Flocchini2015,
  author    = {Paola Flocchini},
  title     = {Computations by Luminous Robots},
  booktitle = {Proceedings of the 14th International Conference on Ad Hoc Networks and Wireless},
  pages     = {238--252},
  year      = {2015},
  url={https://doi.org/10.1007/978-3-319-19662-6_17}
}

@article{Flocchini2012,
author = {Flocchini, Paola and Prencipe, Giuseppe and Santoro, Nicola},
title = {Distributed Computing by Oblivious Mobile Robots},
journal = {Synthesis Lectures on Distributed Computing Theory},
volume = {3},
number = {2},
pages = {1-185},
year = {2012},
url={https://doi.org/10.1007/978-3-031-02008-7}
}

@article{di2017mutual,
  title={Mutual visibility by luminous robots without collisions},
  author={Di Luna, Giuseppe Antonio and Flocchini, Paola and Chaudhuri, S Gan and Poloni, Federico and Santoro, Nicola and Viglietta, Giovanni},
  journal={Information and Computation},
  volume={254},
  pages={392--418},
  year={2017},
  url={https://doi.org/10.1016/j.ic.2016.09.005}
}

@inproceedings{Luna2014b,
  author    = {Di Luna,Giuseppe Antonio and
               Paola Flocchini and
               Federico Poloni and
               Nicola Santoro and
               Giovanni Viglietta},
  title     = {The Mutual Visibility Problem for Oblivious Robots},
  booktitle = {Proceedings of the 26th Canadian Conference on Computational Geometry},
  year      = {2014},
  url={https://www.cccg.ca/proceedings/2014/papers/paper51.pdf}
}

@inproceedings{Peleg2005,
 author = {Peleg, David},
 title = {Distributed Coordination Algorithms for Mobile Robot Swarms: New Directions and Challenges},
 booktitle = {Proceedings of the 7th International Workshop on Distributed Computing},
 year = {2005},
 pages = {1--12},
 numpages = {12},
url={https://doi.org/10.1007/11603771_1}
}

@inproceedings{Sharma2015b,
  author    = {Gokarna Sharma and
               Costas Busch and
               Supratik Mukhopadhyay},
  title     = {Bounds on Mutual Visibility Algorithms},
  booktitle = {Proceedings of the 27th Canadian Conference on Computational Geometry},
  pages     = {268--274},
  year      = {2015},
}

@inproceedings{Sharma2015,
  author    = {Gokarna Sharma and
               Costas Busch and
               Supratik Mukhopadhyay},
  title     = {Mutual Visibility with an Optimal Number of Colors},
  booktitle = {Proceedings of the 11th International Symposium on Algorithms and Experiments for Wireless Sensor Networks},
  pages     = {196--210},
  year      = {2015},
  url={https://doi.org/10.1007/978-3-319-28472-9_15}
}

@article{Sharma2015c,
  author    = {Gokarna Sharma and
               Costas Busch and
               Supratik Mukhopadhyay and
               Charles Malveaux},
  title     = {Tight Analysis of a Collisionless Robot Gathering Algorithm},
  journal = {ACM Transactions on Autonomous and Adaptive Systems},
  volume={12},
  issue={1},
  pages     = {1-20},
  year      = {2017},
  url={https://doi.org/10.1145/3056460}
}

@inproceedings{Vaidyanathan2015,
  author    = {Ramachandran Vaidyanathan and
               Costas Busch and
               Jerry L. Trahan and
               Gokarna Sharma and
               Suresh Rai},
  title     = {Logarithmic-Time Complete Visibility for Robots with Lights},
  booktitle = {Proceedings of the 2015 IEEE International Parallel and Distributed Processing Symposium},
  pages     = {375--384},
  year      = {2015},
  url={https://doi.org/10.1109/IPDPS.2015.52}
}

@inproceedings{sharma2018complete,
  title={The complete visibility problem for fat robots with lights},
  author={Sharma, Gokarna and Alsaedi, Rusul and Busch, Costas and Mukhopadhyay, Supratik},
  booktitle={Proceedings of the 19th International Conference on Distributed Computing and Networking},
  pages={1--4},
  year={2018},
  url={https://doi.org/10.1145/3154273.3154319}
}

@inproceedings{sharma2018make,
  title={How to Make Fat Autonomous Robots See all Others Fast?},
  author={Sharma, Gokarna and Busch, Costas and Mukhopadhyay, Supratik},
  booktitle={Proceedings of the 2018 IEEE International Conference on Robotics and Automation},
  pages={3730--3735},
  year={2018}, 
  url={https://doi.org/10.1109/ICRA.2018.8460899}
}

@inproceedings{poudel2019sublinear,
  title={Sublinear-time mutual visibility for fat oblivious robots},
  author={Poudel, Pavan and Sharma, Gokarna and Aljohani, Aisha},
  booktitle={Proceedings of the 20th International Conference on Distributed Computing and Networking},
  pages={238--247},
  year={2019}, 
  url={https://doi.org/10.1145/3288599.3288602}
}

@inproceedings{sharma2017constant,
  title={Constant-time complete visibility for asynchronous robots with lights},
  author={Sharma, Gokarna and Vaidyanathan, Ramachandran and Trahan, Jerry L},
  booktitle={Proceedings of the 19th International Symposium on Stabilization, Safety, and Security of Distributed Systems},
  pages={265--281},
  year={2017}, 
  url={https://doi.org/10.1007/978-3-319-69084-1_18}
}

@article{cieliebak2012distributed,
  title={Distributed computing by mobile robots: Gathering},
  author={Cieliebak, Mark and Flocchini, Paola and Prencipe, Giuseppe and Santoro, Nicola},
  journal={SIAM Journal on Computing},
  volume={41},
  number={4},
  pages={829--879},
  year={2012}, 
  url={https://doi.org/10.1137/100796534}
}

@article{DAS2016171,
title = {Autonomous mobile robots with lights},
journal = {Theoretical Computer Science},
volume = {609},
pages = {171-184},
year = {2016},
author = {Shantanu Das and Paola Flocchini and Giuseppe Prencipe and Nicola Santoro and Masafumi Yamashita}, 
url={https://doi.org/10.1016/j.tcs.2015.09.018}
}

\newpage
\appendix
\section{Pseudocode}
\begin{algorithm}[ht]
{
$//$ Look-Compute-Move cycle for robot $r_i$ of unit disk size\\
$\cC_k(r_i) \leftarrow$ configuration $\cC_k$ for robot $r_i$ (including $r_i$);\\
$\cH_k(r_i) \leftarrow$ convex hull of the positions of the robots in $\cC_k(r_i)$;\\
{\bf if} $|\cC_k(r_i)|=1$ {\bf then}
Terminate;\\
{\bf else if} $\cH_k(r_i)$ {\it is a line segment} {\bf then}\\
{\Indp
{\bf if}  $|\cC_k(r_i)|=2$ {\bf then}\\
{\Indp
 Let $r_j\in \cC_k(r_i)$;\\
{\bf if} $r_i.light =$ {\it Off} {\bf then} \\
{\Indp
Move orthogonal to line $\overleftrightarrow{r_ir_j}$ by any non-zero distance;\\
 $r_i.light\leftarrow$ {\it Red};\\
}
{\bf else if} $r_j.light=$ {\it Red} {\bf then}\\ 
{\Indp
{$r_i.light=$ {\it Red};\\
Terminate;} \\
}
}
{\bf else if} $|\cC_k(r_i)|=3$ {\bf then}\\
{\Indp
Let $r_j, r_l\in \cC_k(r_i)$;\\
{\bf if} $r_i.light=$ {\it Off} $\wedge$ $r_j.light=$ {\it Red} $\wedge$ $r_l.light=$ {\it Red} {\bf then} \\
{\Indp
Move orthogonal to line $\overleftrightarrow{r_jr_l}$ by any non-zero distance;\\
$r_i.light\leftarrow$ {\it Red};\\
}}}
{\bf else if} $r_i$ {\it is a corner robot of}  $\cH_k(r_i)$ {\bf then}
$Corner(r_i,\cC_k(r_i),\cH_k(r_i))$;\\
{\bf else if} $r_i$ {\it is an interior robot of}  $\cH_k(r_i)$ {\bf then}
$Interior(r_i,\cC_k(r_i),\cH_k(r_i))$;\\
{\bf else if} $r_i$ {\it is a side robot of}  $\cH_k(r_i)$ 
{\bf then}
$Side(r_i,\cC_k(r_i),\cH_k(r_i))$;\\
}
\caption{{\mv} algorithm 
} 
\label{algorithm:2}
\end{algorithm}

\begin{algorithm*}[ht]
{
\scriptsize
{\bf if} $r_i.light = $ {\it Off} {\bf then} \\
{\Indp
Order the robots in $\cH_k(r_i)$ starting from any arbitrary robot $v_1$ in the clockwise order so that $\cT=\{v_1,\ldots, v_{last},v_1\},$ where $v_1$ is the first robot and $v_{last}$ is the last robot;\\
Let $c,d$ be any pair of two consecutive robots in $\cT$ with
$c.light=$ {\it Red} and $d.light=$ {\it Red};\\
Let $HP_{cd}$ be the half-plane defined by line parallel to $\overleftrightarrow{cd}$ that passes through $r_i$ such that
$c,d$ are in $HP_{cd}$;\\
$Q \leftarrow$ set of line segments $\overline{cd}$ such that:\\
{\Indp
(a) the triangle $r_i,c,d$ does not contain (neither inside nor on its edges) any other robot of $\cC_k(r_i)$, and \\
(b) there is no robot in edge $\overline{cd}$, and \\
(c) there is no robot in $\cC_k(r_i) \backslash \cH_k(r_i)$ closer to edge $\overline{cd}$ than $r_i$, and\\
(d) there are no two robots with equal distance to $\overline{cd}$ appearing counterclockwise and clockwise of $r_i$ with respect to the local coordinate system of $r_i$, and\\
(e) the length of $\overline{cd}$ is at least 3\;
}
{\bf if} $Q$ {\it is not empty} {\bf then} \\
{\Indp
$\overline{u_1u_2} \leftarrow$ the line segment in $Q$ between two robots $u_1,u_2$ that is closest to $r_i$;\\
{\bf if} there is no other robot with light {\it Off} that is at equal distance to $\overline{u_1u_2}$  {\bf then}\\
{\Indp
$m\leftarrow$ midpoint of $\overline{u_1u_2}$\;
$L\leftarrow$ line perpendicular to $\overline{u_1u_2}$ passing through its midpoint $m$\;
Order the robots in the counterclockwise order of $r_i$ (with respect to the local coordinate system of $r_i$) such that the order is $\cT_i=\{v_1,v_2,v_3,v_4\}$;\\
Compute angles $\alpha = 180^\circ-\angle v_4v_3v_2$ and $\beta = 180^\circ-\angle v_1v_2v_3$, and set $\delta =\min\{\alpha/4,\beta/4\}$;\\
Compute a point $x'$ such that $\angle x'v_3v_2 = \delta$ and a point $x''$ such that $\angle x''v_2v_3 = \delta$;\\
$L(r_im) \leftarrow$ line segment connecting $r_i$ and $m$\;
$x=x'+L(r_im)$, where $x'$ is the nearest point to $e$ in the safe zone outside the convex hull\;
$Move(r_i,\cC_k(r_i),\cH_k(r_i),u_1,u_2,x)$\;
}
{\bf else if} there exists a robot in the clockwise direction of $r_i$ (with respect to the local coordinate system of $r_i$) with light {\it Off} that is at equal distance to $\overline{u_1u_2}$ {\bf then}\\
{\Indp
$m\leftarrow$ point in $\overline{u_1u_2}$ at  $\frac{\length(\overline{u_1u_2})}{3}$ from endpoint $u_1$\;
$L\leftarrow$ line perpendicular to $\overline{u_1u_2}$ passing through the point $m$\;
Order the robots in the counterclockwise order of $r_i$ (with respect to the local coordinate system of $r_i$) such that the order is $\cT_i=\{v_1,v_2,v_3,v_4\}$;\\
Compute angles $\alpha = 180^\circ-\angle v_4v_3v_2$ and $\beta = 180^\circ-\angle v_1v_2v_3$, and set $\delta =\min\{\alpha/4,\beta/4\}$;\\
$L(r_im) \leftarrow$ line segment connecting $r_i$ and $m$\;
Compute a point $x'$ such that $\angle x'v_3v_2 = \delta$ and a point $x''$ such that $\angle x''v_2v_3 = \delta$;\\
$x=x'+L(r_im)$, where $x'$ is the nearest point to $e$ in the safe zone outside the convex hull\;
$Move(r_i,\cC_k(r_i),\cH_k(r_i),u_1,u_2,x)$\;
}
{\bf else if} there exists a robot in the counterclockwise direction of $r_i$ (with respect to the local coordinate system of $r_i$) with light {\it Off} that is at equal distance to $\overline{u_1u_2}$ {\bf then}\\
{\Indp
$m\leftarrow$ point in $\overline{u_1u_2}$ at  $\frac{\length(\overline{u_1u_2})}{3}$ from endpoint $u_2$\;
$L\leftarrow$ line perpendicular to $\overline{u_1u_2}$ passing through the point $m$\;
Order the robots in the counterclockwise order of $r_i$ (with respect to the local coordinate system of $r_i$) such that the order is $\cT_i=\{v_1,v_2,v_3,v_4\}$;\\
$L(r_im) \leftarrow$ line segment connecting $r_i$ and $m$\;
Compute angles $\alpha = 180^\circ-\angle v_4v_3v_2$ and $\beta = 180^\circ-\angle v_1v_2v_3$, and set $\delta =\min\{\alpha/4,\beta/4\}$;\\
Compute a point $x'$ such that $\angle x'v_3v_2 = \delta$ and a point $x''$ such that $\angle x''v_2v_3 = \delta$;\\
$x=x'+L(r_im)$, where $x'$ is the nearest point to $e$ in the safe zone outside the convex hull\;
$Move(r_i,\cC_k(r_i),\cH_k(r_i),u_1,u_2,x)$\;
}
}
}
}
\caption{$Interior(r_i,\cC_k(r_i),\cH_k(r_i))$}
\label{algorithm:2i}
\end{algorithm*}

\begin{algorithm}[ht]
{
Move $r_i$ distance 1 along the angle bisector of its neighbors on $\cC_k(r_i)$ in the direction that does not intersect the interior of $\cC_k(r_i)$;\\
{\bf if} $r_i.light=$ {\it Off}
{\bf then}
$r_i.light \leftarrow$ {\it Red};\\
{\bf else if} $\forall r\in \cC_k(r_i)$, $r.light=$ {\it  Red}
{\bf then}  Terminate;\\

\caption{$Corner(r_i,\cC_k(r_i),\cH_k(r_i))$}
\label{algorithm:2v}
}
\end{algorithm}

\begin{algorithm}[ht]
{
$L_{r_ix} \leftarrow$ line segment connecting $r_i$ and $x$\;
$L'_{r_ix}, L''_{r_ix} \leftarrow$ lines parallel to $L_{r_ix}$ at distance $1/2$ on either side of $L_{r_ix}$ towards $\overline{u_1u_2}$\;
{\bf if} $L'_{r_ix}$ and $L''_{r_ix}$ share no point occupied by any other robot {\bf then}\\
{\Indp
Move to point $x$ in the safe zone\;
$r_i.light \leftarrow $ {\it Red}\;
}}
\caption{$Move(r_i,\cC_k(r_i),\cH_k(r_i),u_1,u_2,x)$}
\label{algorithm:move}
\end{algorithm}

\begin{algorithm}[ht]
{
{\bf if} {\it at least one neighbor of $r_i$ in the edge $e$ it belongs to has light} {\it Red} {\bf then}\\
{\Indp
Order the robots in the counterclockwise order of $r_i$ (with respect to the local coordinate system of $r_i$) such that the order is $\cT_i=\{v_3,v_2,r_i,r,v_0\}$, where $v_3$ is the first robot non-collinear to $r_i$ in the clockwise direction of $r_i$ with $v_3.light=$ {\it Red}, $v_2$ is the robot that is collinear with $r_i$ in the clockwise direction of $r_i$, and $r$ is the collinear robot in the counterclockwise direction of $r_i$, and $v_0$ is the first non-collinear robot to $r_i$ in the counterclockwise direction of $r_i$ with $v_0.light=$ {\it Red};\\
Compute angles $\alpha = 180^\circ-\angle v_0rr_i$ and $\beta = 180^\circ-\angle r_iv_2v_3$, and set $\delta =\min\{\alpha/4,\beta/4\}$;\\
Compute a points $x'$ and $x''$ such that $\angle x'v_2r_i = \delta$ and $\angle x''rr_i = \delta$ and $r_ix'$ and $r_ix''$ are perpendicular to $e$;\\
$x\leftarrow$ $x'$ or $x''$ whichever is nearest to $e$;\\
Move perpendicular to $e$ with destination $x$;\\
$r_i.light \leftarrow $ {\it Red}\;
}}
\caption{$Side(r_i,\cC_k(r_i),\cH_k(r_i))$}
\label{algorithm:2s}
\end{algorithm}

\end{document}